\documentclass[12pt]{article}
\usepackage{amsfonts}
\usepackage{eurosym}
\usepackage{graphicx}
\usepackage{booktabs}
\usepackage{float}
\usepackage{subfig}
\usepackage{amssymb}
\usepackage{comment}
\usepackage{amsmath}
\usepackage{booktabs}
\usepackage{multirow}
\usepackage{enumerate}
\usepackage{color}
\usepackage[authoryear]{natbib}
\usepackage{hyperref}
\usepackage{bbm}
\usepackage{setspace}
\usepackage[title]{appendix}

\definecolor{airforceblue}{rgb}{0.36, 0.54, 0.66}
\definecolor{battleshipgrey}{rgb}{0.52, 0.52, 0.51}
\definecolor{charcoal}{rgb}{0.21, 0.27, 0.31}
\hypersetup{colorlinks=true, linkcolor=airforceblue, citecolor=airforceblue, urlcolor=battleshipgrey}

\newtheorem{assumption}{{\bf \sc Assumption}}

\newtheorem{lemma}{{\bf \sc Lemma}}

\newtheorem{proposition}{{\bf \sc Proposition}}

\def\eproof{\hbox{\hskip3pt\vrule width4pt height8pt depth1.5pt}}
\begin{document}

	\title{Social Learning with Endogenous Information and the Countervailing Effects of Homophily
	}
	\author{Yunus C. Aybas\thanks{Department of Economics, Texas A\&M University.} \and Matthew O. Jackson\thanks{%
			Department of Economics, Stanford University. 
			Jackson is also an external faculty member at the Santa Fe Institute.
			We gratefully acknowledge support under NSF grants SES-1629446 and SES-2018554.
	}}

	\date{June 2026}
	\maketitle

	\begin{abstract}
		People learn about beneficial opportunities and actions by observing the experiences of their friends. We model how homophily---the tendency to associate with similar others---affects both the endogenous quality and diversity of the information accessible to decision makers.  Homophily has countervailing implications.  Observing the payoff of another person is more informative the more similar that person's payoff function is to that of the decision maker. However, if agents with similar preferences herd on the same action, then homophily can lead agents to miss out on knowledge of the payoffs of other actions. We show how network connectivity influences this tradeoff and affects the endogenous collection of information. Although homophily hampers learning in sparse networks, it enhances learning in sufficiently dense networks.

		\textsc{JEL Classification Codes:} D85, D13, L14, O12, Z13
		
		\textsc{Keywords:} Social Networks, Homophily, Networks, Herding, Social Learning, Investments, Information, Social Capital, Inequality, Immobility
	\end{abstract}

	\thispagestyle{empty}
	
	\setcounter{page}{0} \newpage
	
	\section{Introduction}

	\cite{chettyetal2022I} find that ``economic connectedness''  emerges as a predictor of economic mobility in the United States and stands out compared to a broad set of types of social capital. Specifically, a lack of friendships across economic divides predicts that people growing up in poor households are significantly more likely to stay poor.  Moreover, \cite{chettyetal2022I} find that once such homophily is accounted for, the prominent relationship between inequality and immobility is mediated.\footnote{The relationship between inequality and immobility is known as the Great Gatsby Curve: see \cite{corak2016,jackson2019,jackson2021} for background.}  Places with lower levels of connection between rich and poor are both more unequal and have lower economic mobility.
	Moreover, \cite{power2026endow} find that communities, drawn from a data set of sites around the globe, with lower levels of economic connectedness have higher levels of inequality.
	One hypothesis from such findings is that information bridging capital---friendships across groups that might have different information---leads to better social learning about advantageous opportunities, which can lead to both greater equality and mobility.  In this paper, we build a model that allows us to study that and related hypotheses.

	Given that people communicate most frequently with people with similar preferences and behaviors---due to homophily---they can end up only learning about a limited set of options, thus missing out
	on valuable opportunities.   Different groups might not only end up with differing levels of education and outcomes within a generation, but this can persist across generations.
	Therefore, it is important to understand how, when, and why homophily induces systematic differences in people's beliefs and behaviors across different groups, and how those differences can persist dynamically.
	
	On the plus side, homophily can improve the relevance of information that individuals obtain from friends' experiences. Observing that someone with similar preferences, background and skills has a positive experience---such as getting into and enjoying a particular graduate education program---can provide more relevant information than observing the outcome for someone with different preferences, background and skills.
	For example, \cite{porter2020gender} show that female role models are influential in determining other female students' major choices.
	\cite{sorensen2006social} shows that employees learn more about health plan choices from colleagues with similar backgrounds, \cite{conley2010learning} show that farmers learn more about agricultural techniques when their peers have a similar wealth level, and \cite{malmendier2022information} examine the influence of who communicates information on decisions of vaccine adoption among other things (see also the
	seminal work by \cite{katzl1955}). Thus, homophily can improve the \textit{quality} of information that someone gets from their network.
	
	On the minus side, homophily can lead a person to learn only about a limited set of choices.  
	Moreover, this can lead to inefficiencies via the impact it has on endogenous information generation.
	If a person's friends are predominantly from a group with similar preferences and information, and they herd on the same choice, then that person ends up only learning about that choice and not about other choices.
	For example, many investors invest disproportionately in domestic equities, ignoring the benefits of diversifying into foreign equities \citep*{frenchp1991}.\footnote{This behavior is not easily explained by direct barriers, e.g., transaction costs and capital controls \citep*{AHEARNE2004}.  Information asymmetries and lower resulting posterior/residual risk seem to be key drivers of such biases \citep*{AHEARNE2004,veldkamp2009,PORTES2005}.}
	Therefore, homophilous groups of people can end up (rationally) herding on inefficient decisions because those are the ones about which they have the best information.  This then feeds back and becomes persistent.
	Thus, homophily can limit the endogenous {\sl quantity and diversity} of information available to people from their social networks.

	We model this tradeoff by having agents choose between a \textit{safe} action with payoff normalized to 0 and a \textit{risky} action with unknown payoff. For instance, an agent may be choosing between taking a minimum wage job or pursuing a college education.  Agents differ in their costs of taking the action. In the baseline, homophily is on group identity, and groups have different cost distributions, so group homophily induces cost-relevant exposure. In Appendix~\ref{multicost} we also explore homophily on the cost dimension.
	
	Agents know their costs, but do not know the benefit of the risky action.\footnote{This is a somewhat artificial distinction, and one could add uncertainty over costs.  What is important is that there is heterogeneity in overall payoffs and risky-action payoffs are uncertain to at least some agents at an interim stage.}
	Agents with costs above the unknown benefit would be better off taking the safe action, while all those with costs below the benefit should take the risky action.
	
	The model has overlapping generations. 	Current agents observe the choices of their friends from the previous cohort, know their friends' costs, and see whether their friends succeeded or failed (got a net positive or negative payoff) if they took  the risky action.	 Two things limit agents' learning.
	One is that they may have a different cost than their friend(s): for instance an agent with a high cost, upon observing a low-cost agent succeed with the risky action, is not sure whether they will also succeed.
	The other is that they can only learn about the risky action's payoff if some of their friends take the risky action. This enables us to study the competing forces generated by homophily: gaining more information about what one's own experience would be by observing another agent who has a similar cost take the risky action, but potentially failing to learn about the risky action if one does not observe any agents try it. Herding on the safe action by a group of agents can arise as an endogenous impact of homophily.
	
	In particular, agents belong to one of two groups:  blues or greens (e.g., income groups, ethnic groups, genders, caste groups, etc.).   These groups can have different cost distributions.
	Homophily means that greens are more likely to have green friends, and blues are more likely to have blue friends.
	The key implication is that friends are more likely to have similar costs than two people at random.
	This fuels the positive influence of homophily in that someone is more likely to learn from a friend than a randomly selected agent, provided that the friend has tried the risky action. However, in equilibrium blues and greens end up taking the risky action with different probabilities, which means that an agent is more or less likely to have a chance to learn about the risky action depending on which group their friend belongs to.
	
	We find that having fewer friends or facing greater uncertainty about the payoff to the risky action increases the negative impact of homophily (inducing more learning failures for a group).
	In particular, with small numbers of friends and high amounts of homophily, a group can end up herding on inefficient decisions because of a lack of payoff information about the risky action. Once people have more friends or face less uncertainty, then they are more likely to observe the risky action being taken (or make inferences from it not being taken) which improves learning of both groups and overcomes inefficient herding, and eventually it becomes more important to be learning from people with similar costs, and then homophily is beneficial.
	
	These results showcase the importance of using a dynamic equilibrium model to understand the full endogenous impact of homophily on social learning.  Equilibrium actions evolve over time, and the path of that evolution depends on the level of homophily and groups' initial beliefs and costs; ultimately determining what each group eventually learns.  
	
	Our analysis also provides new insights into dynamic policies.  For example, cross-group ties can be especially useful early in the process, when one group  has historically not taken a risky action and can learn from another group that more routinely experiments.  Once both groups are taking the risky action with  sufficient frequency,  homophily provides more insight and can increase the speed and breadth of learning and spread the higher payoff action widely. For instance, an underrepresented group in some endeavor that involves uncertain payoffs may benefit from mentorships from an over-represented group early on in the learning process, but then would benefit more from same-type mentorships after sufficient time, as we detail below.

	There are additional nuances in the results. For instance, perhaps counter-intuitively, higher correlations between the costs across groups increase the positive impact of homophily. The intuition behind this is layered.
	On the one hand, higher correlations in costs across groups lead to greater learning across groups, which would superficially seem to lower the benefit of homophily.  However, this also means that groups tend to act similarly, which then has a feedback of increasing the benefit of homophily. This shows how endogeneity of the action can reverse the impact of homophily.

	Our paper relates to various strands of the social learning literature.  The herding here differs from the usual forms of herding \citep{banerjee1992,bikhchandanihw1992,smiths2000} that we term  \textit{inference herding} and that occur due to cascading inferences of agents who end up choosing actions independently of their private information.   Our agents observe the costs and success or failure of other agents, and so here the herding comes from the fact that some portions of the population do not observe direct payoff information about the risky action, even when it is taken by some other group. Agents can still make indirect action-based inferences from observed safe choices; these inferences generally work through the endogenous action rates rather than through exogenous private-signal cascades.\footnote{One might expect that this relates to the strategic experimentation literature (e.g., \cite{keller2005strategic}). However, our setting only has agents take actions once, and so it is more similar to the herding literature which also involves informational externalities.}  Homophily impacts which groups of the population herd in which ways---there can be simultaneous herds on different actions in equilibrium.   We refer to this type of herding as \textit{sample herding}, to distinguish it from inference herding.
	
	Our model draws on the idea that people learn more from those most similar.  That idea has been explored previously in other contexts \citep{ketss2016,sethi2016communication}, including work that explores the tradeoff between learning from similar versus diverse sources; e.g. \cite{aral2011diversity}.
	Our work provides new insights into how homophily leads to sample herding and into the payoff and inequality implications suggested by such herding.
	
	Our social learning model is ``{\sl active}'' in that information is endogenously generated depending on the choices of the agents.  The effects of homophily have been previously studied in ``{\sl passive}'' settings, in which people communicate exogenously given information through a network.   For example, \cite{golubj2012} show that homophily can slow learning in a model of repeated communication of beliefs and updating of posteriors.  In another study of passive learning,	\cite{lobel2016preferences}  find a tradeoff for homophily.  In contrast to our result, in their setting homophily is less useful in dense networks and more useful in sparse networks. The key difference is the active/passive distinction: we have endogenous actions which generate different information
	as a function of the network, while in \cite{lobel2016preferences} network density affects the ability of agents to interpret the signals, but the signals received do  not change  with network density.
	Thus, having the actions chosen depend upon homophily provides new and different insights, especially regarding homophily-based herding and its implications for efficiency and inequality.
	
	Recent work has shown that homophily in job-market referral networks can lead to differences in behaviors and outcomes across groups \citep{buhai2020social,bolte2020,miller2021dynamics}.
	This can result in inefficiencies and persistent group differences, including inequality and immobility.
	Although some of the consequences are similar, the reasons are quite distinct and
	hence have different policy implications.  In the context of the referral model, reducing homophily is unambiguously good, while here homophily's effects are ambiguous and depend on the density of the network and the underlying cost and information setting.\footnote{One could build a model of referrals in which there are countervailing forces as well, but that goes beyond the existing models.}

	\section{The Model}

	\noindent

	\textbf{Agents:} There is a continuum of agents, normalized to measure one, divided into two groups called \textit{blues} and \textit{greens}, denoted by $\{b, g\}$.
	
	\textbf{Actions and their Values:} Each agent chooses between two actions.
	The first is a \textit{safe} action (e.g., working a minimum wage job) whose payoff is normalized to $0$. The second is a \textit{risky} action (e.g., attending university).
	
	The risky action has a random benefit $v$ that equals $1$ with probability $p$ and $0$ with probability $1-p$. Hence, the expected payoff from the risky action is $p$.
	The risky action's value $v$ is the same across agents and periods.
	
	\textbf{Costs of Actions:} The safe action's cost is normalized to 0.  
	The risky action entails a cost, which depends on the agent's group. Specifically, an agent from group $\theta \in \{b, g\}$ incurs cost $c_\theta > 0$ with probability $\pi_\theta$, and $0$ with probability $1 - \pi_\theta$.
	Consequently, in equilibrium, the risky action is either ex post optimal for every agent or else only for those incurring zero cost.
	
	To avoid ties, we assume that $c_\theta\neq 1$. Thus, the ex post payoff from the risky action to any agent is strictly positive or strictly negative. 
	To fit with the assumption that costs and values are distinct, one could alternatively let the low cost be negative or the lowest value of $v$ be slightly positive, so that certain agents always benefit from the risky action regardless of its payoff.

	\textbf{Information:} Each agent knows their own realized idiosyncratic cost $c$ of taking the risky action, but not the realized value $v$. We next describe how agents form their expectations.

	\subsection*{Learning and Dynamics}
	
	Agents live in overlapping generations. In each period $t\in \{1, 2, \ldots\}$, a new continuum of agents are born and each agent decides which action to take. Before making this decision, agents learn from their friends in the previous generation.
	
	Groups can differ in the number of friends that they have and in the rate of \textit{homophily} in their friendships.
	In particular, the underlying information network is described by a directed network (a directed graphon). Group-\(\theta\) agents, for \(\theta\in\{b,g\}\), have an integer $d_\theta>0$ number of friends from the previous generation. Each of these friends is from the same group with probability $h_\theta$, and from the other group with the remaining probability $1-h_\theta$. The actual distribution of how many of an agent of group $\theta$'s friends are from their own group is then an independent binomial random variable with $d_\theta$ draws, each with probability $h_\theta$.  Each draw of a group picks an agent from that group with uniform probability.

	Agents see each friend's action choice, group, cost, and whether their net payoff was positive or negative, but not the precise value $v$. In particular, each of an agent's $d_\theta$ observations is in the form $(o,c,\theta')\in\{-,+,\emptyset\}\times\{0,c_b,c_g\}\times\{b,g\}$.
	The first dimension $o$ is a summary of the action choice and outcome: $o=+$ indicates that the observed agent took the risky action and got a nonnegative payoff, $o=-$ indicates that  the agent took the risky action and got a negative payoff, and $o=\emptyset$ indicates that the agent took the safe action.

	Agents update their beliefs about $v$ based on their observed vector of $d_\theta$ signals, $\mathbf{s}$, and their knowledge of the equilibrium structure.
	The information that an agent of generation $t+1$ needs as a basis for updating is the equilibrium fraction of each group $\theta$ agents with cost $c$ taking the risky action in period $t$ as a function of the realization of $v$.  Specifically, there are two types of nontrivial inferences that agents can make:
	
	\begin{itemize}
		
		\item \textbf{Direct Inference:} An agent may observe signals of the form  $(+, c_\theta, \theta)$ or $(-, c_\theta, \theta)$, which can indicate that another high-cost agent tried and succeeded or failed  at the risky action. This allows them to update beliefs about whether $v=1$.  For instance, a $+$ signal from a positive cost agent reveals $v=1$, and similarly a $-$ signal from a positive cost reveals that $v=0$. A $+$ signal from a zero cost agent is not payoff-revealing.
		
		\item \textbf{Indirect Inference:} An agent may also observe signals of the form $(\emptyset, c_\theta, \theta)$, indicating that a high-cost agent chose not to take the risky action. In this case, the observing agent must make an indirect inference about $v$ by using Bayes' rule with their understanding of the (differences in) equilibrium behavior between when $v=1$ and when $v=0$.
	\end{itemize}
	
	We presume that agents with  zero cost follow their dominant strategy, and so always take the risky action.   Thus, they send a positive signal for both values of $v$, and provide no information.
	
	Thus, the equilibrium dynamics are fully characterized by the observations and choices of the green and blue agents who have positive costs.     What information can be inferred from a friend from the other group depends on the relative cost levels.     
	
	The presence of zero cost agents is of interest even though they provide no information, since their presence affects the relative fraction of agents in each group who could potentially be informative. 
	
	In Appendix \ref{multicost}, we examine more than two cost levels.  The basic forces that we outline here persist, but the analysis becomes more complex.   We have designed the model to be the simplest that provides all the moving parts that we described in the introduction.

	\subsection*{Equilibrium Inferences}

	We denote the equilibrium fraction of blue agents with cost $c_b$ and green agents with $c_g$ taking the risky action at period $t$ by $b_t(v)$ and $g_t(v)$, respectively. These are functions of the state (the value of $v$), and have some initial conditions $b_0, g_0$. One can choose these initial conditions based on the agents' prior, but we also solve for the more general case in which these are allowed to vary. The system converges to a steady-state (fixed point) $(g^*(v), b^*(v))$.
	
	An agent's posterior belief in period $t+1$, after observing a vector of signals $\mathbf{s}$ from period $t$ agents, is obtained by Bayes' Rule and denoted by $\mathbb{E}_{t+1}\left[v \mid \mathbf{s}, g_t(\cdot), b_t(\cdot) \right]$.
	
	After observing the vector of signals $\mathbf{s}$ at time $t+1$, an agent from group $\theta$ with cost $c_\theta$ takes the risky action if the posterior expectation $\mathbb{E}_{t+1}\left[v \mid \mathbf{s}, g_t(\cdot), b_t(\cdot) \right]\geq c_\theta$.  Otherwise, the agent takes the safe action. We break ties in favor of the risky action, but any rule can be used with corresponding adjustments in the expressions.
	
	The fraction of agents taking the risky action in period $t+1$ of group $\theta$ and cost $c$ corresponds to the probability of receiving a signal profile $\mathbf{s}$ such that $\mathbb{E}_{t+1}\left[v \mid \mathbf{s}, g_t(\cdot), b_t(\cdot) \right]\geq c$.
	Thus, the dynamics are given by:
	\begin{align}
		g_{t+1}(v) &= \mathbb{P} \left\{ \left. \mathbb{E}_{t+1}\left[v \mid \mathbf{s}, g_{t}(\cdot), b_{t}(\cdot) \right] \geq c_g \  \right\vert \  v, g_{t}(v), b_{t}(v)  \right\}. \label{dynamics-1}\\
		b_{t+1}(v) &= \mathbb{P} \left\{ \left. \mathbb{E}_{t+1}\left[v \mid \mathbf{s}, g_{t}(\cdot), b_{t}(\cdot) \right]\geq c_b \  \right\vert \  v, g_{t}(v), b_{t}(v)  \right\}. \label{dynamics-2}
	\end{align}
	
	A vector $(g^*(v),b^*(v)) \in [0,1]^2$ is a {\sl steady state} if it solves the system of equations given by \eqref{dynamics-1} and \eqref{dynamics-2}  with $(g_t,b_t)=(g_{t+1},b_{t+1})=(g,b)$. 
	
	A steady state exists, and we omit the existence proof that follows from a standard fixed-point argument.  There can exist multiple steady states---for instance if none of the high cost agents invest then nothing is learned and for some priors this becomes self-inducing.  If instead agents expect high investment and the state turns out to be good there can exist steady states that involve investment.  Hence, in what follows we indicate when we are referring to specific steady-states, otherwise the properties are generic.

	We now characterize how homophily shapes steady states and can either benefit or hinder learning in the long run.
	
	\section{Countervailing Effects of Homophily}
	
	We begin by noting a key feature of the model: the dynamics are monotonic in the value of the risky action.
	
	\begin{lemma} \label{monotone}
		For every $t \geq 1$,  $b_t(1) \geq b_t(0) \text{ and } g_t(1) \geq g_t(0)$.
	\end{lemma}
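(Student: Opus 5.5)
The plan is an induction on $t$, paired with a direct comparison of the probability of taking the risky action in the two states. Fix a group $\theta$ and its high cost $c_\theta$. If $c_\theta>1$, the posterior mean of $v$ never exceeds $1<c_\theta$. So high-cost agents of that group never take the risky action, the corresponding fraction is $0$ in both states, and the inequality holds trivially. The substantive case is $c_\theta<1$. As the induction hypothesis I take $b_{t-1}(1)\ge b_{t-1}(0)$ and $g_{t-1}(1)\ge g_{t-1}(0)$. For the base case I use initial conditions $b_0,g_0$ that do not depend on $v$, as when they are chosen from the prior; more generally it suffices that they satisfy the same weak inequality.

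Next I partition the signal profiles $\mathbf{s}$ an agent can receive. Each of the $d_\theta$ friends has a group and a cost whose distribution does not depend on $v$. Zero-cost friends always send $+$ in both states, so they are uninformative. A high-cost friend who takes the risky action sends $+$ if $v=1$ and $-$ if $v=0$ (this uses $c<1$; a high-cost friend with cost above $1$ never acts). Let $S$ be the set of \emph{non-revealing} profiles, in which every high-cost friend shows $\emptyset$. On $S^c$ the state is revealed. Under $v=1$ the profile in $S^c$ carries a $+$ from a positive-cost agent, so the posterior is $1\ge c_\theta$ and the agent takes the risky action. Under $v=0$ the profile carries a $-$, so the posterior is $0<c_\theta$ and the agent takes the safe action. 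Let $A$ be the set of profiles in $S$ at which the posterior is at least $c_\theta$, and write $P_v(\cdot)$ for the probability of a profile under state $v$. Then
\[
\text{(risky share in state 1)}=P_1(S^c)+P_1(A),\qquad \text{(risky share in state 0)}=P_0(A).
\]
So the lemma reduces to showing $P_0(A)-P_1(A)\le P_1(S^c)$.

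The key step is the pointwise comparison $P_0(\mathbf{s})\ge P_1(\mathbf{s})$ for every $\mathbf{s}\in S$. A profile in $S$ has probability equal to a composition term (groups and costs of the friends), which is independent of $v$, times a product over high-cost friends of $1-x_{t-1}(v)$. Here $x\in\{b,g\}$ is the friend's group's period-$(t-1)$ action rate. By the induction hypothesis $1-x_{t-1}(0)\ge 1-x_{t-1}(1)$, which gives the pointwise inequality. Since $A\subseteq S$, it follows that
\[
P_0(A)-P_1(A)\le \sum_{\mathbf{s}\in S}\bigl(P_0(\mathbf{s})-P_1(\mathbf{s})\bigr)=P_0(S)-P_1(S)\le 1-P_1(S)=P_1(S^c),
\]
which closes the induction for both groups simultaneously.

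The main obstacle is the indirect-inference channel. Observing $\emptyset$ is evidence \emph{against} $v=1$ when action rates are higher in the good state, so $A$ could in principle be heavily weighted toward profiles that are more likely under $v=0$. The argument above avoids analyzing the shape of $A$ altogether. It uses only that the excess mass of non-revealing profiles under $v=0$ is exactly offset by the revealing $+$ profiles under $v=1$. This bookkeeping is what requires the induction hypothesis, since it supplies the sign of $P_0-P_1$ on all of $S$. The tie-breaking rule is immaterial, because it affects only the definition of $A$.
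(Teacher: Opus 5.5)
Your closing chain of inequalities is fine, but its key input, $P_0(\mathbf{s})\ge P_1(\mathbf{s})$ on all of $S$, comes from the induction hypothesis $x_{t-1}(1)\ge x_{t-1}(0)$. You therefore have to seed it with monotone initial conditions, and the lemma does not grant that. The paper lets $(g_0(\cdot),b_0(\cdot))$ vary with $v$ and defines stability over arbitrary nearby initial conditions. The statement also starts at $t\ge 1$, which suggests monotonicity is created by one step of the dynamics rather than inherited from $t=0$.

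Without that hypothesis your key step is false. If $g_0(0)=1>0=g_0(1)$, a profile containing a safe choice by a high-cost green friend has positive probability only under $v=1$. Yet $g_1(1)\ge g_1(0)$ still holds, because that profile has posterior $1$ and lies in $A$.

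The restriction also matters downstream. In the proof of Proposition~\ref{full}, the lemma is applied to a steady state to rule out $g(0)>g(1)$. For a fixed point your induction is circular, because its base case is the steady state itself.

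The missing idea is that your chain only needs $\sum_{\mathbf{s}\in S\setminus A}\bigl(P_0(\mathbf{s})-P_1(\mathbf{s})\bigr)\ge 0$, and Bayes' rule delivers this from the cutoff structure of $A$, with no assumption on past rates. For positive-probability profiles, $\mathbf{s}\in A$ iff $P_1(\mathbf{s})\ge\kappa P_0(\mathbf{s})$, where $\kappa=c_\theta(1-p)/\bigl(p(1-c_\theta)\bigr)$.

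\begin{itemize}
\item If $c_\theta<p$ (so $\kappa<1$), then $P_1<\kappa P_0\le P_0$ on $S\setminus A$, and your display goes through verbatim.
\item If $c_\theta\ge p$ (so $\kappa\ge 1$), then $P_1\ge\kappa P_0\ge P_0$ on $A$, so $P_1(A)\ge P_0(A)$ directly.
\end{itemize}

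This is exactly the paper's proof: a one-step statement that the transition map sends every $(g_t(\cdot),b_t(\cdot))$ to a monotone pair. What your version buys instead is independence from the shape of $A$, since any decision rule on $S$ would do. That generality is precisely what forces the extra hypothesis.
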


	There are two different possibilities for the types of agents that an agent sees and the relative probabilities of seeing these two possibilities are unchanged across states:  
	(i) only zero cost agents,  (ii) some positive costs agents. 
	All that changes across states is which actions some agents take.  The zero cost agents do not affect beliefs
	and so in case (i) the updating is identical regardless of the state.   
	So, we focus on case (ii).  These break into two subcases, one in which none of the positive cost agents take the risky action in either state,  and the other in which at least one of the positive cost agents takes the action in at least one of the states.  In the first subcase there is again no difference in updating and so the same actions are taken regardless of the state.  
	It is thus only in the second subcase that any difference in updating occurs, and then the updating is necessarily different in direction of the actual states since taking the action perfectly reveals the state and so whichever state the action was taken under is learned perfectly and so cannot have a less extreme updating or effect on action than in the other case.

	\subsection{Full Homophily} \label{Full Homophily}
	
	Before analyzing how homophily impacts learning and behavior, it is useful to solve a benchmark case with \textit{full homophily}, $h_b=h_g=1$.
	This is effectively as if there is only one group since agents only ever see their own type. Without loss of generality, we consider the green group.

	When more than one steady state exists, only some are stable. 
	We say that a steady state $(g, b) \in [0,1]^2$ is \textit{stable} if there is an $\varepsilon > 0$ such that for any initial condition $(g_0, b_0)$ within an $\varepsilon$-neighborhood of $(g, b)$, the system's dynamics converge to $(g(v), b(v))$ for almost every $v$.

	\begin{proposition}
		\label{full}
		Suppose that there is full homophily ($h_g=1$) and at least one of $\pi_g$ or $d_g$ is not equal to 1.\footnote{If $\pi_g=d_g=1$, then every $g(1)\in [0,1]$ is a steady state.} Then the steady states are:
		\begin{itemize}
			\item If $c_g \leq p$, then  $\left[ g(0), g(1) \right] =\left[  \left( 1- \pi_g \right)^{d_g}, 1 \right]$ is the unique steady state.
			\item If $c_g>p$ and $\pi_g d_g \leq 1$, then $\left[ g(0), g(1) \right] =\left[  0, 0 \right]$ is the unique steady state.
			\item If $c_g>p$ and $\pi_g d_g > 1$, then there are two steady states: $\left[ g(0), g(1) \right] =\left[  0, 0 \right]$  and $\left[ g(0), g(1) \right] =\left[  0, g^* \right]$ for some $g^* \in (0,1)$. Moreover,  the latter is the unique stable steady state.
		\end{itemize}	
	\end{proposition}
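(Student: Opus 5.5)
The plan is to write the full-homophily dynamics as an explicit one-dimensional map in each state, using the inference structure described before Lemma~\ref{monotone}. A green agent's $d_g$ friends are all green. Each friend independently has positive cost with probability $\pi_g$, and a positive-cost friend takes the risky action with probability $g_t(v)$. Zero-cost friends are uninformative. A positive-cost friend who took the risky action reveals $v$ exactly: a $+$ reveals $v=1$ and a $-$ reveals $v=0$. A positive-cost friend who took the safe action multiplies the prior odds of $v=1$ by the likelihood ratio $(1-g_t(1))/(1-g_t(0))$. By Lemma~\ref{monotone} this ratio is at most $1$, so safe observations can only lower beliefs. The posterior of an agent who sees no revealing signal and $k$ safe positive-cost friends is therefore a weakly decreasing function of $k$, equal to $p$ at $k=0$. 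I would then split according to whether $c_g>p$ or $c_g\le p$.

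\textbf{Case $c_g>p$.} The prior is already below $c_g$, and safe observations only push beliefs down. So an agent takes the risky action if and only if some friend reveals $v=1$. In state $0$ this never happens, so $g_{t+1}(0)=0$ from any initial condition. In state $1$ it happens with probability
\[
g_{t+1}(1)=F(g_t(1)),\qquad F(x):=1-(1-\pi_g x)^{d_g}.
\]
The map $F$ is increasing and concave, with $F(0)=0$ and $F'(0)=\pi_g d_g$; it is strictly concave when $d_g\ge 2$ and linear with slope $\pi_g<1$ when $d_g=1$, which is where the hypothesis excluding $\pi_g=d_g=1$ enters. If $\pi_g d_g\le 1$, then $F(x)<x$ on $(0,1]$, so $0$ is the only fixed point and iterates converge to it. If $\pi_g d_g>1$, then $F(x)>x$ near $0$. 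Moreover, $F(1)=1-(1-\pi_g)^{d_g}<1$ when $\pi_g<1$, so by concavity there is exactly one interior fixed point $g^*$. At that point $F'(g^*)<1$, whereas $F'(0)>1$. Hence $0$ is unstable, $g^*$ attracts every positive initial condition, and $g(0)=0$ is hit after one step. The boundary case $\pi_g=1$ (where $F(1)=1$ and $g^*=1$) needs a separate remark.

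\textbf{Case $c_g\le p$.} Here the prior favors the risky action. An agent abstains only if revealing signals show $v=0$, or if enough safe observations drag the posterior below $c_g$.

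First I would verify the proposed steady state. Suppose $g(1)=1$. Then in state $1$ no positive-cost friend is ever safe, so everyone sees either $+$ or zero-cost friends, and everyone invests. In state $0$, any positive-cost friend is either a $-$ or a safe observation; the latter has likelihood ratio $0$ because $g(1)=1$, so it also reveals $v=0$. The agent therefore invests exactly when all friends have zero cost, which gives $g(0)=(1-\pi_g)^{d_g}$.

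The main obstacle is uniqueness: ruling out steady states with $g(1)<1$, where safe observations are only partially informative. My approach would be as follows. In state $1$, an agent abstains only if she sees at least one safe positive-cost friend and no $+$ friend. Given that at least one safe friend is required, let $k^*\ge1$ be the smallest number of safe friends that pushes the posterior below $c_g$. Writing $x=1-g(1)$ and $y=1-g(0)$, I would combine the two state-wise fixed-point equations with Bayes' rule at the threshold $k^*$. The aim is a contradiction: either $x/y$ is too close to $1$ for $k^*$ safe signals to overturn a prior $p\ge c_g$, or the implied $y$ is too small for the posterior to fall. If that direct comparison fails, I would fall back on the dynamics. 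Starting from any $(g_0(0),g_0(1))$, one period already gives $g_1(1)\ge 1-(1-(1-\pi_g))^{d_g}$, and I would try to show that $1-g_t(1)$ contracts along iterates.
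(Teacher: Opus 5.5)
Your treatment of $c_g>p$ is essentially the paper's argument. Lemma~\ref{monotone} rules out upward revisions from safe signals, so $g(0)=0$, and $g(1)$ is a fixed point of the concave map $1-(1-\pi_g x)^{d_g}$, whose slope at $0$ is $\pi_g d_g$. Your caveat that $g^*\in(0,1)$ needs $\pi_g<1$ is correct; the paper's step $\Gamma(1)<1$ uses this silently. One minor point: Lemma~\ref{monotone} only holds from $t\geq 1$, so from an arbitrary initial condition $g_t(0)=0$ is guaranteed only from $t=2$ on. This does not affect stability.

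The genuine gap is uniqueness when $c_g\leq p$. You identify it as the main obstacle but give no argument. The threshold-$k^*$ comparison is never carried out. The fallback bound $g_1(1)\geq 1-\pi_g^{d_g}$ is false: take an agent with one zero-cost friend and one safe positive-cost friend. The safe observation can push her posterior below $c_g$, for instance when $g_0(1)$ is near $1$ and $g_0(0)$ is near $0$.

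The missing idea is to bound investment in state $1$ using only profiles whose verdict does not depend on how safe signals are read. There are two such kinds of profile. Profiles containing at least one $(+,c_g,g)$ reveal $v=1$. The all-zero-cost profile leaves the posterior at $p\geq c_g$. These events are disjoint, so for every $t$ and every initial condition
\[
g_{t+1}(1)\;\geq\;\Gamma\bigl(g_t(1)\bigr):=1-\bigl(1-\pi_g g_t(1)\bigr)^{d_g}+(1-\pi_g)^{d_g}.
\]
$\Gamma$ is concave with $\Gamma(0)=(1-\pi_g)^{d_g}>0$ and $\Gamma(1)=1$, so $\Gamma(g)>g$ on $[0,1)$. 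Hence any steady state has $g(1)=1$, and your verification argument then forces $g(0)=(1-\pi_g)^{d_g}$.

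Writing $x_t=1-g_t(1)$, the bound reads $x_{t+1}\leq \bigl(1-\pi_g+\pi_g x_t\bigr)^{d_g}-(1-\pi_g)^{d_g}$. Convexity of the right-hand side in $x_t$ then gives $x_{t+1}\leq \bigl(1-(1-\pi_g)^{d_g}\bigr)x_t$, which is exactly the contraction you were hoping for. No analysis of $k^*$ is needed.

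Here too $\Gamma(0)>0$ uses $\pi_g<1$. At $\pi_g=1$ you must rule out $g(1)=0$ separately. In that case Lemma~\ref{monotone} gives $g(0)=0$, so safe signals are uninformative and everyone would invest, a contradiction.
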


	The logic behind Proposition \ref{full} is sketched as follows.
	When $c_g \leq p$, the default action under the prior is the risky one and so the payoff is learned when the value is 1, and $g(1)=1$.   If $v=0$, agents who see another high cost type either observe a negative payoff (direct inference) or the safe action taken (indirect inference) - both of which reveal that the state is $0$.  Thus, the only agents who take the risky action are those who don't see another high cost agent, and follow their prior.  That happens with probability $g(0) =  \left( 1- \pi_g \right)^{d_g}$.
	
	When $c_g> p$, high cost agents only take the risky action if they see information that causes them to update positively, which can only come from a high cost agent taking the action and getting a positive payoff.  This implies that $g(0)=0$, and also implies that $g(1)=0$ is always a steady state.
	The possibility of $g(1)>0$ requires high enough probability of observing other high cost agents.  In that case, if there is some small $\varepsilon$ of high cost of types who take the risky action, then others learn from that. Given that $\pi_g d_g>1$, this converges upward as on average more see those.  It does not converge to 1, as there are also some who do not observe any high cost types, or observe those from the previous generation who did not take the risky action.

	\subsection{Partial Homophily}
	
	The implications of homophily and cross-group inferences 
	then become evident when there is interaction between the different groups, so we now focus on the case in which $h_\theta \in (0,1)$.

	The cases in which both groups have the same default action are similar to the
	full-homophily benchmark, and so we briefly mention them here and then
	provide details in an appendix.  If \(c_b,c_g\leq p\), positive cost agents in both
	groups take the risky action under the prior. If \(p<c_b,c_g<1\), positive cost
	agents in both groups take the safe action under the prior, but a success by a
	positive cost agent reveals that \(v=1\).\footnote{If \(c_\theta>1\), then a
		positive cost \(\theta\)-agent never earns a positive net payoff from the risky
		action, even when \(v=1\). Such a type therefore cannot generate a success
		signal in the binary baseline. The same-default safe case is consequently
		stated for \(p<c_\theta<1\).}  In Appendix~\ref{app:same-default}, we provide
	corresponding transition equations and steady-state results.

	The interesting case is such that the two groups have different default actions. Without loss of generality we now focus on the case in which $c_g> p\geq c_b>0$. Thus, $c_g$ green agents have the safe action as their default action, and only change to risky action if  they receive information causing them to update to a posterior belief with sufficiently high probability on $v=1$, while cost $c_b$ agents take the risky action as their default action without any updating on the state.
	The latter fact helps learning.	
	It also means that the steady-states are interior in all cases in the sense that there are always some agents taking each action, as there are agents who have priors who lead them to take the risky action with no information and others who do not, and there is always a chance that an agent does not see any high cost agents and so does not update.

	Steady-states are nuanced, however, as agents not only update when they see the payoff to another high cost agent, but also when they observe a high cost agent who
	does not take the risky action.  Those proportions depend on the agent's type and cost and the state, and so there are (nonlinear) interdependencies.

	Although steady states cannot always be fully expressed in closed form, we can deduce comparative statics.
	A particularly interesting comparative static captures the dual nature of homophily.
	Greens are better off seeing the high cost-types who are frequently taking the risky action,
	which could be either greens or blues depending on the context, and depends on the level of homophily in a way that we can characterize.
	
	A steady state $(g^*, b^*)$ is said to be \textit{regular} if it is stable and $g_{t+1}(v)$ and $b_{t+1}(v)$ are differentiable with respect to $g_t(v)$ and $b_t(v)$ in a neighborhood around $(g^*(v), b^*(v))$ for almost every $v$.

	\begin{proposition} \label{prop-simplified-general}
		Let $(g^*(\cdot),b^*(\cdot))$ be a regular steady state. Then $g^*(1)$ is increasing in $h_g$ if and only if $\pi_g g^*(1) > \pi_b b^*(1)$.
	\end{proposition}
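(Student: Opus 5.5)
The plan is to reduce the steady-state condition in state $v=1$ to a fixed point of an explicit polynomial map, and then apply the implicit function theorem to that fixed point in the parameter $h_g$. We are in the case $c_g>p\geq c_b$.

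\textbf{Step 1: fix the decision rule locally.} Regularity means that $g_{t+1}(1)$ and $b_{t+1}(1)$ are differentiable in $(g_t,b_t)$ near the steady state. Since signal profiles are discrete, I would use this to argue that the set $S_\theta$ of signal profiles after which a positive-cost $\theta$-agent takes the risky action is locally constant in the action rates. The indirect-inference posteriors move continuously, so a switch of some profile's action at the steady state would create a kink or jump, which regularity rules out.

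\textbf{Step 2: write the dynamics in state $1$ as a polynomial map.} With $S_\theta$ fixed, each of a green agent's $d_g$ draws falls into one of four categories, independently across draws:
\begin{itemize}
\item a high-cost green taking the risky action, with probability $h_g\pi_g g$;
\item a high-cost green taking the safe action, with probability $h_g\pi_g(1-g)$;
\item the analogous two events for blues, with probabilities $(1-h_g)\pi_b b$ and $(1-h_g)\pi_b(1-b)$;
\item a zero-cost agent, with the remaining probability.
\end{itemize}
Hence $g_{t+1}(1)=G(g_t,b_t;h_g)$ is a polynomial in these cell probabilities. The blue map $b_{t+1}(1)=B(g_t,b_t)$ is built the same way but does not involve $h_g$.

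For greens specifically, since $c_g>p$, the argument behind Lemma~\ref{monotone} shows two things. A safe observation from a high-cost agent only lowers the posterior, because risky actions are more frequent when $v=1$. A success from a high-cost agent reveals $v=1$. So $S_g$ is exactly the set of profiles containing at least one high-cost success, and
\[
G=1-\bigl(1-q\bigr)^{d_g},\qquad q=h_g\pi_g g+(1-h_g)\pi_b b .
\]
Differentiating, $\partial G/\partial h_g=d_g(1-q)^{d_g-1}\bigl(\pi_g g-\pi_b b\bigr)$. This has the sign of $\pi_g g^*(1)-\pi_b b^*(1)$ whenever $q<1$, and $q<1$ is automatic because some agents see no high-cost friend.

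\textbf{Step 3: apply the implicit function theorem.} Let $F=(G,B)$ and $J=DF$ at the steady state. Then
\[
\frac{d(g^*,b^*)}{dh_g}=(I-J)^{-1}\bigl(\partial G/\partial h_g,\,0\bigr)^{\top},
\]
so $dg^*(1)/dh_g=[(I-J)^{-1}]_{11}\,\partial G/\partial h_g$. It remains to show $[(I-J)^{-1}]_{11}>0$.

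Stability gives spectral radius $\rho(J)<1$. If $J$ is entrywise nonnegative, then $(I-J)^{-1}=\sum_k J^k\geq I$ entrywise, so the $(1,1)$ entry is at least $1$. For the green row, nonnegativity is immediate from the formula for $G$. For the blue row, I would argue that raising $g$ or $b$ in state $1$ shifts probability mass from high-cost safe observations, which lower beliefs, to high-cost successes, which raise them to $1$. Because blues act on a monotone posterior threshold rule, this can only enlarge the probability of landing in $S_b$. The inferences themselves depend on $(g(0),b(0))$ and on the equilibrium rates, but by Step~1 these are held fixed locally.

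\textbf{Expected main obstacle.} The hard step is the blue monotonicity in Step~3: showing that $B$ is nondecreasing in both arguments even though blues use indirect inference and $S_b$ is not a simple ``at least one success'' set. I would first try a coupling argument. Realize each draw as a high-cost agent of a given group, then independently decide risky or safe. Raising $g$ or $b$ only converts safe observations into successes, and converting any single safe observation into a success weakly raises the posterior, since it moves the posterior to certainty on $v=1$. Therefore $S_b$ is an up-set under this conversion order, and monotonicity follows. If that fails, a fallback is to avoid nonnegativity altogether: use the two-by-two formula $[(I-J)^{-1}]_{11}=(1-J_{22})/\det(I-J)$ and check its sign directly from the stability conditions, namely $\det(I-J)>0$ and $J_{22}<1$ for a stable two-dimensional map.
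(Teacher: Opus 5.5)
Your proposal is correct and follows the paper's proof. Locally constant decision sets decouple the $v=1$ block, $\partial g_{t+1}(1)/\partial h_g = d_g(1-q)^{d_g-1}(\pi_g g-\pi_b b)$ while $\partial b_{t+1}(1)/\partial h_g=0$, and a nonnegative Jacobian with $\rho(J)<1$ gives $(I-J)^{-1}=\sum_k J^k\geq I$. Your coupling/up-set argument for the blue row is just a cleaner version of the paper's sign computation $\partial\phi_b/\partial g_t(1),\,\partial\phi_b/\partial b_t(1)\leq 0$ on safe-inducing profiles. Only your fallback is wrong: stability alone does not force $J_{22}<1$ for a general $2\times 2$ map, so that route would itself need the nonnegativity you already establish.
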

	
	Thus, homophily either enhances or impedes learning depending on the equilibrium structure. The condition $\pi_g g^*(1) > \pi_b b^*(1)$ implies that the green high cost agents who take the risky action are more plentiful than the corresponding blue agents, and thus connections to green agents are more informative than blue agents.
	Proposition \ref{prop-simplified-general} follows from an application of the implicit function theorem for comparative statics.
	
	The proposition is not in terms of primitives, but in terms of equilibrium parameters.
	In order to derive comparative statics in terms of primitive parameters, we focus on a specific case. Throughout the rest of the section, we focus on stable equilibria with $b^*(1)=1$.
	This holds true for any $c_b$ below a certain threshold that guarantees the cost is low enough to prevent any indirect inference from convincing the blue group to take the safe action in the good state.  We also maintain the assumption that $c_g>p$, which means that greens only take the risky action if they
	see some evidence of the good state.
	
	In such settings, green agents never take the risky action in state $v=0$, and only take the risky action in state $v=1$ when they see some high-cost agent taking the risky action.
	Blues all take the risky action in the state $v=1$. When $v=0$, blue agents take the risky action if they do not see any blue agents with high cost, as a high cost blue friend who fails directly reveals $v=0$, and under $b^*(1)=1$ a high-cost blue friend taking the safe action also reveals $v=0$.\footnote{Given $b(1)=1$, blues seeing high cost greens taking the safe action still prefer to take the risky action.}
	The dynamics are thus:
	\begin{align*}
		g_t(0)&=0,  &  g_t(1)&=1 - \bigg( 1- \big[h_g \pi_g g_{t-1}(1) + (1-h_g) \pi_b \big] \bigg)^{d_g},\\
		b_t(0)&= \bigg(1-h_b \pi_b \bigg)^{d_b}, & 	b_t(1)&=1.
	\end{align*}	
	These dynamics are decoupled for different values $v$, as indirect inferences are muted. 
	Greens learn from both greens and blues, but blues are always taking the risky action in the high value state, and so that is a corner solution
	and not interacting with green behavior, so $g_t(1)$ depends only on $g_{t-1}(1)$.
	Blues only learn from blues as under $b^*(1)=1$ they only modify their behavior in the low-value state and when seeing someone take the risky action and failing, and greens are not taking the risky action.
	
	These equations yield intuitive comparative statics in terms of primitives.  For these results we examine the interior case in which $\pi_\theta\in (0,1)$.  This simplifies the statements as it rules out corner cases in which the inequalities are no longer strict, but the corner cases are straightforward to calculate.
	
	\begin{proposition} \label{lowblue}
		Consider a setting in which blue agents strictly prefer to take the risky action unless they observe a negative payoff to some agent. Then, $g^*(1)$ is increasing in $h_g$ if and only if $\pi_g>\pi_b$  and $d_g > \bar{d} = \log_{1-\pi_b}\left[\frac{\pi_g-\pi_b}{\pi_g}\right]$.	
		\medskip
		\noindent
		Moreover,  given any $g_{t-1}(1)$:
		\begin{itemize}
			\item $b_t(0)$ (and $b(0)$) is decreasing in $d_b$ and in $\pi_b h_b$,
			\item $g_t(1)$ (and $g(1)$) is increasing in $\pi_g$, $\pi_b$ and $d_g$.\footnote{
				If $g_{t-1}(1)=0$ then $g_t$ is independent of $\pi_g$.	}
		\end{itemize}
	\end{proposition}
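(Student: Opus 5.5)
Writing $x=g(1)$, the plan is to work directly with the decoupled dynamics displayed just before the statement, whose steady state in the good state solves the one-dimensional fixed-point equation
\[
x = F(x;h_g) \equiv 1-\bigl(1-h_g\pi_g x-(1-h_g)\pi_b\bigr)^{d_g}.
\]
Since $\pi_b>0$ and $h_g<1$, we have $F(0;h_g)>0$. The map $F$ is increasing and concave in $x$, and $F(1;h_g)<1$ because $\pi_\theta<1$. Hence there is a unique fixed point $x^*\in(0,1)$, and it satisfies $F_x(x^*)<1$, so it is stable. Applying the implicit function theorem gives
\[
\frac{dx^*}{dh_g}=\frac{F_{h}(x^*)}{1-F_x(x^*)}.
\]
The denominator is positive, so the sign of the comparative static is the sign of $F_h(x^*)=d_g(\cdot)^{d_g-1}(\pi_g x^*-\pi_b)$. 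This says $x^*$ is increasing in $h_g$ iff $\pi_g x^*>\pi_b$, which is the specialization of Proposition~\ref{prop-simplified-general} with $b^*(1)=1$.

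The main step is translating $\pi_g x^*>\pi_b$ into primitives. Since $F$ is increasing with a unique crossing, $x^*>\pi_b/\pi_g$ holds iff $\pi_b/\pi_g<1$ and $F(\pi_b/\pi_g)>\pi_b/\pi_g$. So I would evaluate $F$ at the candidate point $x_0=\pi_b/\pi_g$, which requires $\pi_g>\pi_b$ so that $x_0\in(0,1)$; if $\pi_g\le\pi_b$ the condition fails, since $x^*<1$. At $x_0$ the bracket simplifies nicely, because $h_g\pi_g x_0+(1-h_g)\pi_b=\pi_b$ independently of $h_g$. Thus $F(x_0)=1-(1-\pi_b)^{d_g}$. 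The condition $1-(1-\pi_b)^{d_g}>\pi_b/\pi_g$ is equivalent to $(1-\pi_b)^{d_g}<(\pi_g-\pi_b)/\pi_g$. Taking logs base $1-\pi_b<1$, which reverses the inequality, this becomes $d_g>\log_{1-\pi_b}[(\pi_g-\pi_b)/\pi_g]=\bar d$. This is the claimed threshold. The hardest point is making sure the unique-crossing argument is airtight, i.e.\ that the stable regular steady state is the interior fixed point of this concave map. This follows from concavity and $F(0)>0$.

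For the remaining bullets, I would differentiate the explicit expressions holding $g_{t-1}(1)$ fixed. For blues, $b_t(0)=(1-h_b\pi_b)^{d_b}$ has a base in $(0,1)$, so it decreases in $d_b$ and in the product $h_b\pi_b$. For greens, $g_t(1)=1-(1-q)^{d_g}$ with $q=h_g\pi_g g_{t-1}(1)+(1-h_g)\pi_b\in(0,1)$. It increases in $q$ and, since $1-q<1$, in $d_g$. Moreover $q$ is increasing in $\pi_b$ (using $h_g<1$) and in $\pi_g$ when $g_{t-1}(1)>0$, and independent of $\pi_g$ otherwise, as in the footnote. The steady-state versions follow from the same implicit-function formula, since $F_x<1$ at $x^*$ and the partial derivatives of $F$ in $\pi_g,\pi_b,d_g$ are positive. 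Treating $d_g$ as continuous for the monotonicity, or comparing $F(\cdot;d_g)$ pointwise and using the single crossing, makes this step rigorous for integer $d_g$.
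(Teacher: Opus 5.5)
Your proof is correct. It shares the paper's skeleton: a unique interior fixed point of the concave map, the implicit function theorem reducing the sign of $\partial g^*(1)/\partial h_g$ to the sign of $\pi_g g^*(1)-\pi_b$, and the observation that at $\bar g=\pi_b/\pi_g$ the bracket collapses to $\pi_b$. Where you differ is in translating $g^*(1)>\pi_b/\pi_g$ into the threshold on $d_g$.

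The paper solves the fixed-point equation for the degree, $d_g=\ln(1-g^*)/\ln(1-h_g\pi_g g^*-(1-h_g)\pi_b)$. It then argues, via the inverse function theorem and monotonicity of $g^*(1)$ in $d_g$, that the right-hand side is increasing in $g^*$, and evaluates it at $\bar g$. This implicitly treats $d_g$ as a continuous parameter, which is also how the figure is drawn.

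You instead hold $d_g$ and $h_g$ fixed. You use the single-crossing property of $F(x)-x$, which by concavity and $F(0)>0$ is positive exactly on $[0,x^*)$. So $x^*>\pi_b/\pi_g$ reduces to checking $F(\pi_b/\pi_g)>\pi_b/\pi_g$, that is, $1-(1-\pi_b)^{d_g}>\pi_b/\pi_g$.

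Your route is more elementary. It works directly at integer $d_g$ and needs no inverse-function step. It also makes transparent that the test quantity is independent of $h_g$, so the sign of $\partial g^*(1)/\partial h_g$ is constant in $h_g$ and the monotonicity is global rather than merely local. The paper's route, in exchange, yields an explicit parametrization of the steady-state curve in $(d_g,g^*)$ space.

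Your handling of the remaining bullets matches the paper's. You also correctly make the steady-state versions rigorous for integer $d_g$ by pointwise comparison of $F(\cdot;d_g)$ together with single crossing.
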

	
	The statements about $\pi_\theta$ and $d_\theta$ follow from the facts that
	having more observations (higher $d_\theta$) and more high cost types (higher $\pi_\theta$) leads to greater information.
	
	The remaining statements are about the interaction of homophily.   In this setting, blues are not influenced in their behaviors from observing greens, and so that accounts for $b_t(0)$ being decreasing in $h_b$.
	The comparative static about $h_g$ shows that homophily can be either beneficial or harmful. We illustrate  this relationship and show how the effect of degree is greater as homophily is increased in figure  \ref{fig:crossing}.
	
	\begin{figure}[h!!]
		\centering
		\includegraphics[width=1\textwidth]{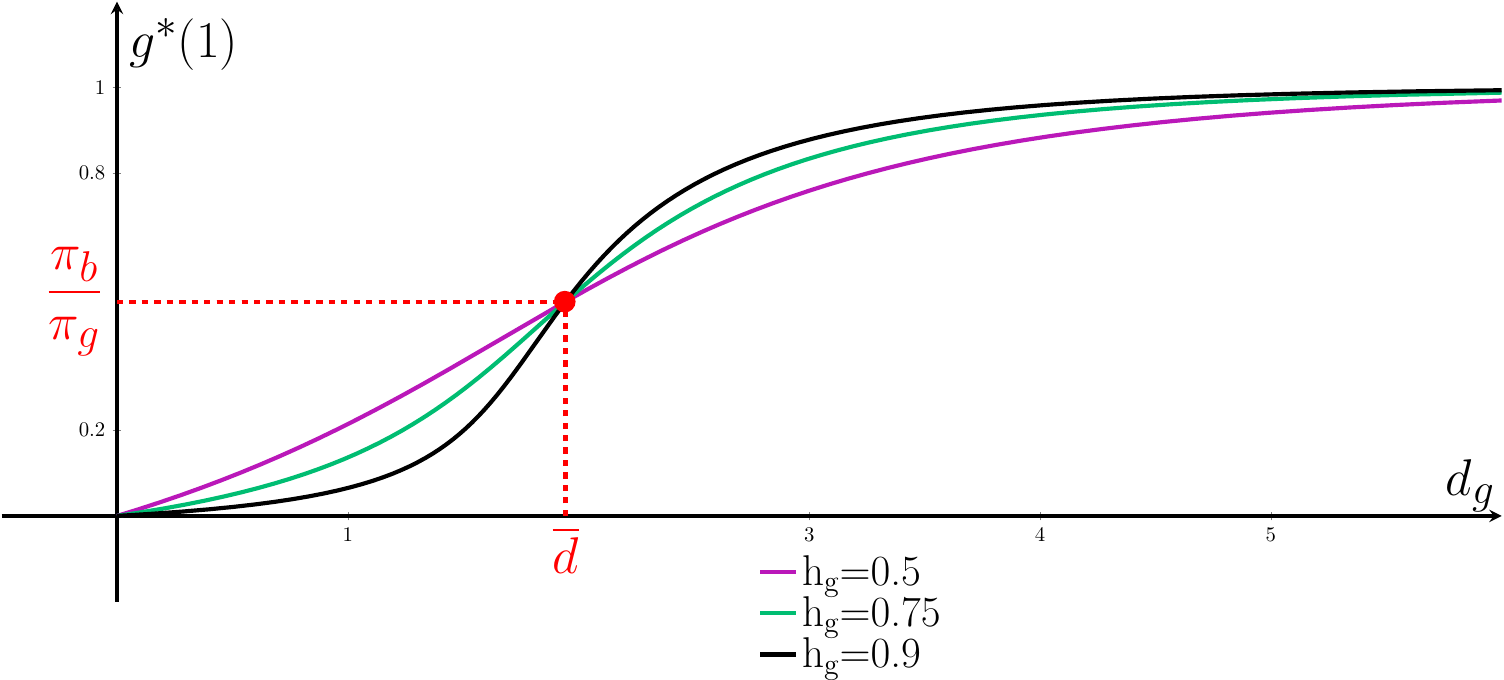}
		\caption[Caption of Crossing]{Steady state level of green group agents taking the risky action when $\pi_g=0.6, \pi_b=0.3$. We show the plot for real values of $d_g$ but the actual values of $d_g$ are discrete. }
		\label{fig:crossing}
	\end{figure}
	
	In particular, greens can learn both from blues and greens who take the risky action, and what is consequential is the relative fraction of each who are taking the risky action.
	That depends on the degrees, which accounts for the condition $d_g >  \bar{d}$.  For high enough degree, greens get many observations and so are likely to take the risky action, and so that favors learning from greens, to the extent that $\pi_g > \pi_b$.
	If instead $\pi_g \leq  \pi_b$, then seeing blues dominates seeing greens.
	In this case, green agents are not taking the risky action often, and increased homophily among greens reduces their information about the risky action leading them to herd to the safe action. This is the {sample herding} we referred to in the introduction.
	
	\subsection{More Levels of Costs and Values}
	
	Appendix~\ref{multicost} extends the analysis to finitely many cost levels and values of the risky action, allowing friendship patterns to depend on costs in addition to group identity (in the base model these are effectively the same thing). This richer environment delivers two new implications.
	
	First, complete learning---agents behaving as if they knew $v$ and taking the risky action if and only if $v$ exceeds their cost---is the unique stable steady state when the network exhibits perfect cost homophily, so that observed experimentation outcomes are informative for an observer's own cutoff.
	
	Second, the extension shows how homophily measured along one attribute can be generated by sorting on another. In particular, homophily in blue/green space can arise even when friendship formation is independent of blue/green identity conditional on costs. If agents match assortatively on costs and the blue and green groups have different cost distributions, then blues end up disproportionately linked to blues and greens to greens. Thus, cost assortativity generates \textit{incidental homophily} in blue/green space, despite the absence of any direct preference for same-color friendships. Moreover, when one group’s cost distribution likelihood-ratio dominates the other’s, the induced blue/green homophily varies monotonically with cost, so homophily is itself assortative, with a threshold cost at which the direction of homophily/heterophily flips across groups.

	\section{Concluding Remarks}
	
	The benefits of group/cost homophily arise in the long run after a group has begun to learn the state sufficiently and takes informative actions, and otherwise it can be inhibiting, along the lines of Proposition \ref{lowblue}.
	This suggests policies that encourage cross-group relationships when there are substantial differences in risky-action choices by groups (e.g., consistent with \cite{chettyetal2022I}),
	but then allowing cost-homophily once learning from own types becomes sufficient.  
	
	In terms of policy implications that one gets within our model, there are two clear ones. First, policies disseminating information about success rates by cost can be much more enlightening than simply providing success rates. Second, subsidizing (or insuring) risky choice by a group can reverse a sample herding problem that arises because of homophily.   Once sufficiently seeded, that policy might only need be implemented for a short period, but could have a lasting impact.

	\newpage
	\begin{spacing}{0.8}
		\bibliographystyle{jpe}
		\bibliography{InequalityNetworks}
	\end{spacing}
	
	\begin{appendices}
		
		\section{Proofs}
		
		\noindent
		\textbf{Proof of Lemma \ref{monotone}:}
		We prove the result for green agents. The proof for blue agents is identical.
		
		Let \(S\) be the set of signal profiles with no direct payoff signal from a positive cost agent. Thus profiles in \(S\) contain only zero cost \(+\) signals and safe-action signals of the form \((\emptyset,c_\theta,\theta)\), \(\theta\in\{b,g\}\). 
		
		For each \(\mathbf{s}\in S\), define $\mu(\mathbf{s}) = \mathbb{P}\!\left(\mathbf{s}\mid v=1,g_t(1),b_t(1)\right),$ and $\nu(\mathbf{s}) = \mathbb{P}\!\left(\mathbf{s}\mid v=0,g_t(0),b_t(0)\right).$
		
		Let \(R\subseteq S\) be the set of profiles after which a positive cost green agent chooses the risky action: $R= \left\{ \mathbf{s}\in S: \mathbb{E}_{t+1}\!\left[v\mid \mathbf{s},g_t(\cdot),b_t(\cdot)\right]\geq c_g \right\}.$
		
		Profiles with zero probability under both states can be ignored, and for any profile with positive probability, by Bayes' rule, we have that:
		\[
		\mathbf{s}\in R
		\quad\Longleftrightarrow\quad
		\mu(\mathbf{s})
		\geq
		\underbrace{\frac{c_g(1-p)}{p(1-c_g)}}_{:=\kappa}\nu(\mathbf{s}).
		\]

		Under \(v=0\), any direct payoff signal from a positive cost agent is negative and reveals that the state is bad, so it cannot induce the risky action. Therefore,
		\[
		g_{t+1}(0)=\sum_{\mathbf{s}\in R}\nu(\mathbf{s}).
		\]
		Under \(v=1\), any direct payoff signal from a positive cost agent is positive and reveals that the state is good. Since \(c_g<1\), any such signal induces the risky action. Hence
		\[
		g_{t+1}(1)
		=
		1-\sum_{\mathbf{s}\in S}\mu(\mathbf{s})
		+
		\sum_{\mathbf{s}\in R}\mu(\mathbf{s}).
		\]
		
		If \(\kappa\geq 1\), then for every \(\mathbf{s}\in R\), $\mu(\mathbf{s})\geq \kappa\nu(\mathbf{s})\geq \nu(\mathbf{s}),$ so
		\[ 
		g_{t+1}(1) \geq \sum_{\mathbf{s}\in R}\mu(\mathbf{s}) \geq \sum_{\mathbf{s}\in R}\nu(\mathbf{s}) = g_{t+1}(0).
		\]
		
		If \(\kappa<1\), then for every \(\mathbf{s}\in S\setminus R\), $\mu(\mathbf{s})<\kappa\nu(\mathbf{s})<\nu(\mathbf{s}).$ Thus
		\[
		\sum_{\mathbf{s}\in R}\left[\nu(\mathbf{s})-\mu(\mathbf{s})\right]
		\leq
		\sum_{\mathbf{s}\in S}\left[\nu(\mathbf{s})-\mu(\mathbf{s})\right].
		\]
		Therefore,
		\[
		\begin{aligned}
			g_{t+1}(1)-g_{t+1}(0)
			&=
			1-\sum_{\mathbf{s}\in S}\mu(\mathbf{s})
			-
			\sum_{\mathbf{s}\in R}\left[\nu(\mathbf{s})-\mu(\mathbf{s})\right] \\
			&\geq
			1-\sum_{\mathbf{s}\in S}\mu(\mathbf{s})
			-
			\sum_{\mathbf{s}\in S}\left[\nu(\mathbf{s})-\mu(\mathbf{s})\right] \\
			&=
			1-\sum_{\mathbf{s}\in S}\nu(\mathbf{s})
			\geq 0.
		\end{aligned}
		\]
		Thus \(g_{t+1}(1)\geq g_{t+1}(0)\).\eproof

		\bigskip
		\noindent
		\textbf{Proof of Proposition \ref{full}:} Suppose that $c_g \leq p$. It can be verified that $g(1)=1$ and $ g(0)=  \left( 1- \pi_g \right)^{d_g}$ is a steady state.
		
		We define the following set of signals, $ S_+ = \{ \mathbf{s}_g : (+,c_g,g) \in \mathbf{s}_g \} \cup \{ (+,0,g)^{d_g} \} $.  Any green agent receiving $\mathbf{s}_g \in S_+$ has a posterior belief above $c_g$ and takes the  risky action. Thus, we have that:
		\begin{align*}
			g_{t+1}(1)  \geq \mathbb{P}_g( \mathbf{s}_g \in S_+ \mid v=1, g_t(1))   = 1 - (1-\pi_g g_t(1))^{d_g} + (1-\pi_g)^{d_g}.
		\end{align*}
		Every steady state $g_t(1)=g_{t+1}(1)=g^*$  solves this inequality, and for $g(1)=1$ it holds with equality. Observe that $\mathbb{P}_g( \mathbf{s}_g \in S \vert v=1, g) $ is  concave  in $g$ whenever $d_g \geq 1$. Since $\Gamma(0)>0$, we conclude $\Gamma(g)>g $ for any $g \in [0,1)$. Thus, $g(1)=1$ is the unique steady state.
		
		Now consider the case where $c_g>p$. We start by showing that every steady state has $g(0)=0$. Suppose for the contradiction $g(0)>0$. Thus, there exists a signal profile $\mathbf{s}_g$ such that $\mathbb{E}[v \vert \mathbf{s}_g, g_t(0)] \geq c_g>p$, and $\mathbb{P}( \mathbf{s}_g \vert 0, g_t(0))>0$. The latter also requires that $\mathbf{s}_g$ consists of signals $(\emptyset,c_g,g)$ and $(+,0,g)$. However, by Bayes' rule, such a signal vector  $\mathbf{s}_g$ exists if and only if $g(0)>g(1)$. This contradicts the conclusion of Lemma 1.
		
		Note that, when $c_g>p$, $g_{t+1}(1)$ equals the probability of hearing at least one positive signal from cost $c_g$ types since indirect inferences only reduce the posterior. This probability is given by $\Gamma(g_t(1)) = 1- (1-\pi_g g_t(1) )^{d_g}$. 	
		
		Observe that $g(1)=0$ is a solution $g=\Gamma(g)$. Moreover, $\Gamma(g)$ is concave, continuous in $g$ and $\Gamma'(0) = \pi_g d_g$.	If $\Gamma'(0) \leq 1$, by concavity, $\Gamma(g)<g$ for every $g \in (0,1]$. Thus $g(1)=0$ is the unique solution to $\Gamma(g)=g$. However, if $\Gamma'(0) > 1$, then $g(1)=0$ is not a stable steady state. It follows from $\Gamma(1)<1$ there exists a steady state $g \in (0,1)$. By concavity this steady state is stable. \eproof
		
		\bigskip
		\noindent \textbf{Proof of Proposition \ref{prop-simplified-general}:} We define $\alpha_t(\cdot)=\begin{bmatrix} g_t(1) &  b_t(1)  & b_t(0)  & g_t(0) \end{bmatrix}^T$. The dynamics are given by $\Gamma (\alpha_t)= \begin{bmatrix} g_{t+1}(1 ) & b_{t+1}(1) & b_{t+1}(0) & g_t(0) \end{bmatrix}^T$. In Appendix \ref{app:detail}, we
		describe the details and closed form equations for the dynamics. Moreover, we show that $g_{t+1}(0)=0$ for any $\alpha_t$. So the dynamics can be studied in the reduced state space $\alpha_t(\cdot)=\begin{bmatrix} g_t(1) &  b_t(1)  & b_t(0) \end{bmatrix}^T$

		Let $\alpha^*$ be a regular steady state for $\Gamma(\alpha)$.  By the Hartman-Grobman Theorem, the qualitative properties of a steady state can be analyzed using the linearized system $J_{\Gamma}(\alpha)=\alpha$ where $J_{\Gamma}$ is the Jacobian of $\Gamma(\alpha)$, which exists since $\alpha^*$ is a regular steady state.
		
		In Appendix \ref{app:detail} we  also show that if the dynamics are differentiable around $\alpha^*$, then $b_{t+1}(0)$ and $g_{t+1}(0)$ does not explicitly depend on $b_{t+1}(1)$ and $g_{t+1}(1)$, and vice versa.\footnote{Alternatively note that $J_{\Gamma}$ is non-negative and its spectral radius is $<1$ at a stable fixed point, the inverse of $I-J_{\Gamma}$ exists and equals $\sum_{k \geq 0} J_{\Gamma}^k$, which is entry-wise positive.} We conclude that locally linearized dynamics for $b_{t+1}(1)$ and $g_{t+1}(1)$ only depends on $b_{t}(1)$ and $g_{t}(1)$. So the dynamics for $v=0$ and $v=1$ are decoupled.
		
		We define the Jacobian $J_{\Gamma}$ for the decoupled dynamics for $v=1$ as:
		$$
		J_{\Gamma}=\left[\begin{array}{ll}
			\frac{\partial g_{t+1}(1)}{\partial g_t(1)} & \frac{\partial g_{t+1}(1)}{\partial b_t(1)} \\
			\frac{\partial b_{t+1}(1)}{\partial g_t(1)} & \frac{\partial b_{t+1}(1)}{\partial b_t(1)}
		\end{array}\right].
		$$
		
		Equations \eqref{green-partials} \eqref{blue-partial-1} and \eqref{blue-partial-2} show that $J_\Gamma$ is a positive matrix.  Moreover, in a regular (hence stable) steady state $\alpha^*$, the Jacobian $J_{\Gamma}$ has spectral radius smaller than 1. So, we conclude that $I-J_{\Gamma}$ is an $M$-matrix, when evaluated at a regular steady state.\footnote{It has negative off-diagonal entries and has eigenvalues whose real parts are non-negative.} $M$-matrices are invertible, and their inverse is a positive matrix. We conclude $\left(I-J_{\Gamma}\right)^{-1}$ is a positive matrix.

		Applying the Implicit Function Theorem to the $v=1$ subsystem, we obtain:
		$$
		\begin{bmatrix}
			\frac{\partial g^*(1)}{\partial h_g} & \frac{\partial b^*(1)}{\partial h_g}
		\end{bmatrix}^T=\left(I-J_{\Gamma}\right)^{-1} \begin{bmatrix}
			\frac{\partial g_{t+1}(1)}{\partial h_g} & \frac{\partial b_{t+1}(1)}{\partial h_g}
		\end{bmatrix}^T =  \left(I-J_{\Gamma}\right)^{-1} \begin{bmatrix}
			\frac{\partial g_{t+1}(1)}{\partial h_g} & 0
		\end{bmatrix}^T.$$
		
		The second equality follows from the immediate observation $\frac{\partial b_{t+1}(1)}{\partial h_g}=0$, as shown in equation \eqref{partial-hg} in Appendix~C. Multiplication by the first column of $(I-J_\Gamma)^{-1}$ therefore preserves the sign of the first entry of the right-hand-side vector. Thus,
		$$
		\text{sign} \left( \frac{\partial g^*(1)}{\partial h_g} \right) = \text{sign} \left( \frac{\partial g_{t+1}(1)}{\partial h_g} \right) =  \text{sign} \left(  \pi_g g^*(1) - \pi_b b^*(1) \right).
		$$
		where the second equality is again shown in equation \eqref{partial-hg}.  \eproof

		\bigskip
		\noindent {\bf Proof of Proposition \ref{lowblue}:}
		When \(b^*(1)=1\), the steady state \(g^*(1)\) satisfies the fixed‐point equation
		\[
		g = \Gamma(g) \quad \text{with} \quad \Gamma(g)= 1 - \Bigl(1 - \bigl[h_g \pi_g g + (1-h_g) \pi_b\bigr]\Bigr)^{d_g}.
		\]
		Since \(
		\Gamma(0)=1-\Bigl(1-(1-h_g)\pi_b\Bigr)^{d_g}>0 \), \( \Gamma(1)=1-\Bigl(1-h_g\pi_g-(1-h_g)\pi_b\Bigr)^{d_g}<1,\) and because \(\Gamma(g)\) is continuous and concave on \([0,1]\), there exists a unique interior fixed point \(g^*(1) \in (0,1)\).
		
		Next, since $\Gamma(g)$ is increasing in $g$, the comparative statics follow from the partial derivatives of \(\Gamma(g)\) by the implicit function theorem. In particular, it follows that  \(g^*(1)\) is increasing in \(h_g\) if and only if
		\[
		\frac{\partial \Gamma(g)}{\partial h_g}  = A (\pi_g\,g^*(1) - \pi_b) >0,
		\]
		where $A$ is positive. We conclude \(g^*(1)\) is increasing in \(h_g\) if and only if $\pi_g g^*(1) > \pi_b$. Similarly, one can show that \(g^*(1)\) is strictly increasing in \(d_g\).
		
		For the remainder of the proof, suppose that \(\pi_g > \pi_b\) (the case \(\pi_g \leq \pi_b\) is trivial since then \(\pi_g\,g^*(1) \leq \pi_b\) for every \(g^*(1) \in [0,1]\), implying that the steady state is decreasing in \(h_g\)). Taking the natural logarithm of the fixed-point equation \(g^*(1)=\Gamma(g^*(1))\) and rearranging yields
		\[
		d_g=\frac{\ln\Bigl(1-g^*(1)\Bigr)}{\ln\Bigl(1-h_g\pi_g g^*(1)-(1-h_g)\pi_b\Bigr)}.
		\]
		By the Inverse Function Theorem and the fact that \(g^*(1)\) increases in \(d_g\), the right-hand side is an increasing function of \(g^*(1)\). Now, define \[ \bar{g}=\frac{\pi_b}{\pi_g} \in [0,1], \quad  \text{so we have that} \quad
		1-\bar{g}=1-\frac{\pi_b}{\pi_g} \quad \text{and} \quad 1-h_g\pi_g\frac{\pi_b}{\pi_g}-(1-h_g)\pi_b=1-\pi_b.
		\]
		The equation for \(d_g\) at $g^*(1)=\bar{g}$ simplifies to
		\[
		d_g=\frac{\ln\Bigl(1-\frac{\pi_b}{\pi_g}\Bigr)}{\ln(1-\pi_b)}=\log_{1-\pi_b}\frac{\pi_g-\pi_b}{\pi_g}.
		\]
		By the monotonicity of the right-hand side in \(g^*(1)\), it follows that
		\[
		d_g>\log_{1-\pi_b}\frac{\pi_g-\pi_b}{\pi_g} \quad \text{if and only if} \quad g^*(1)>\bar{g}.
		\]
		Recalling that \(g^*(1)\) is increasing in \(h_g\) if and only if \(\pi_g\,g^*(1) > \pi_b\), we conclude that, under the assumption  \(\pi_g> \pi_b\), an increase in \(h_g\) raises the steady state \(g^*(1)\) precisely when
		\(
		d_g>\log_{1-\pi_b}\frac{\pi_g-\pi_b}{\pi_g}
		\).
		
		Finally the claim for $b_t(0)$ follows immediately from $b_t(0)= \bigg(1-h_b \pi_b \bigg)^{d_b}$, and the remaining monotonicity claims for \(g_t(1)\) follow directly from $g_t(1)=1-\left(1-h_g\pi_g g_{t-1}(1)-(1-h_g)\pi_b\right)^{d_g}.$ \eproof
		
		\bigskip

		\section{Multiple Costs and Values, Assortativity in Costs, and Incidental Homophily}\label{multicost}
		
		We extend the model in the main text to multiple values and costs. 
		
		The risky-action payoff is $v-c$, where the value
		$v$ is drawn from a finite support $\mathcal V\subset \mathbb R_+$ with prior $\Pr(v)$,
		and costs $c$ are drawn from a finite support $\mathcal C\subset \mathbb R_+$ with
		group-dependent distributions $\Pr_\theta(c)$.
		
		As in the main text, agents observe from their friends only the sign of the realized payoff from the risky action (success/failure) and whether the risky action was taken. Moreover, as in the body of the paper, costs and values are assumed to satisfy $\mathcal V\cap \mathcal C=\varnothing$ to avoid indifference.
		
		Instead of just tracking homophily in blue-green types, we also allow homophily to depend on costs. Thus, we now track an agent's type as a pair $(\theta, c) \in\{b, g\} \times \mathcal C$. An agent with type $(\theta,c)$ draws each of her $d_{\theta,c}\in\{1,2,\ldots\}$ friends from the previous generation independently from a distribution
		$h_{\theta,c}(\theta',c')$ over $\{b,g\}\times\mathcal C$.
		
		Let \( \alpha_t(\theta,c,v)\in[0,1] \) denote the period-$t$ fraction of type $(\theta,c)$ agents taking the risky action
		when the realized value is $v$. Write $g_t(c,v):=\alpha_t(g,c,v)$ and $b_t(c,v):=\alpha_t(b,c,v)$. A \textit{steady state} is a collection
		$\alpha^*(\theta,c,v)$ that reproduces itself under Bayesian updating and best responses,
		analogous to equations \eqref{dynamics-1}--\eqref{dynamics-2} in the main text.
		
		Throughout this appendix we impose the following mild richness condition:
		\begin{assumption}\label{ass:nontrivial}
			For every cost $c\in\mathcal C$, there exist $v^-,v^+\in\mathcal V$ with $v^-<c<v^+$.
		\end{assumption}
		
		Assumption \ref{ass:nontrivial} ensures that for every cost level, the risky action is ex-post optimal in some states and suboptimal in others.
		
		\subsection{Perfect cost homophily and complete learning}\label{app:perfect-cost-homophily}
		If the costs of two agents are
		too different from each other then one learns little from whether the other succeeds or fails at the risky action. Agents learn the most from observing others whose costs are ``close enough'' that success/failure has the same implication for whether $v$ lies above or below their own cost.
		
		We say the network exhibits \textit{perfect cost homophily} if whenever a type
		$(\theta,c)$ can observe a type $(\theta',c')$ with positive probability, there is no
		value strictly between their costs. 
		
		More formally, there is perfect cost homophily if for every $(\theta,c)$ and $(\theta',c')$,
		\[
		h_{\theta,c}(\theta',c')>0
		\quad\Longrightarrow\quad
		\not\exists v\in\mathcal V \text{ such that } \min\{c,c'\}<v<\max\{c,c'\}.
		\]

		We also say there is \textit{complete learning} at a steady state if agents behave as if they knew $v$. That is, a steady state $\alpha^*$ exhibits complete learning if $\alpha^*(\theta,c,v)=1$ whenever $c<v$ and $\alpha^*(\theta,c,v)=0$ whenever $c>v$.
		
		Proposition \ref{prop:cl-iff-pch} shows that these two conditions are equivalent.
		
		\begin{proposition} \label{prop:cl-iff-pch}
			Suppose $d_{\theta,c} > 1$ for all $(\theta,c)$ and Assumption \ref{ass:nontrivial} holds.
			Then complete learning is the unique stable steady state if and only if the network exhibits
			perfect cost homophily.
		\end{proposition}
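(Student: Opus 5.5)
We prove the two directions separately, in each case tracking the fraction of \emph{informative} observations. Under perfect cost homophily, a friend $(\theta',c')$ of an observer $(\theta,c)$ who takes the risky action reveals the event $\{v>c\}$ exactly. Since no value lies strictly between $c$ and $c'$, we have $\{v>c'\}=\{v>c\}$. The same holds for a safe-action observation in a complete-learning profile, because at complete learning the friend's action is $\mathbf 1\{v>c'\}=\mathbf 1\{v>c\}$.

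\textbf{Sufficiency, steady state.} First we check that complete learning is a steady state. Suppose the previous generation plays $\alpha(\theta',c',v)=\mathbf 1\{v>c'\}$. Every observation an agent $(\theta,c)$ receives is then a deterministic function of $\mathbf 1\{v>c\}$, and the two events have disjoint signal supports. Hence the posterior puts all mass on $\{v>c\}$ or on $\{v<c\}$, and the agent takes the correct action. Assumption \ref{ass:nontrivial} guarantees both events have positive prior probability, so the posterior is well defined in each case.

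\textbf{Sufficiency, uniqueness and stability.} We mimic the argument in Lemma \ref{monotone} and Proposition \ref{full}. For each value $v$ and cost $c$, define the error $e_t(\theta,c,v)$ as $1-\alpha_t$ if $c<v$ and $\alpha_t$ if $c>v$.

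Take $c>v$. A friend who took the risky action produced a failure, since $v<c'$ by perfect cost homophily. A failure reveals $v<c'$, hence $v<c$, so the observer takes the safe action. The observer can err only if every friend either took the safe action or has a cost $c'$ below $v$ with an uninformative success. The latter is impossible, again by perfect cost homophily. So errors when $c>v$ arise only from safe-action observations, and these are informative only through the equilibrium action rates.

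The main step is to show that the errors contract near complete learning. For profiles near complete learning, observing the safe action from a friend with $c'<v$ has likelihood ratio of order $e_t$ relative to the alternative event. Bayes' rule then pushes the posterior to the right side of $c$ except on a set of signal profiles of probability $O(e_t^{\,d})$ or $O(e_t)$ times a factor strictly below one. This is where $d_{\theta,c}>1$ should enter. Note that with $d=1$ the map resembles the degenerate footnote case of Proposition \ref{full}, where a continuum of steady states arises; two or more friends should give the strict contraction.

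This is the step I expect to be the main obstacle. Showing a contraction near complete learning gives stability. Uniqueness among stable steady states is harder: some other steady state, such as the no-experimentation state, might also appear stable. To handle it I would first try a monotonicity argument in the spirit of Lemma \ref{monotone}. The aim is to show that any steady state with a positive error is destabilized: a small mass of informative experimenters spreads, as in the $\Gamma'(0)$ argument of Proposition \ref{full}, once degrees exceed one. The fallback is to restrict uniqueness to steady states in a neighborhood of complete learning.

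\textbf{Necessity.} Suppose $h_{\theta,c}(\theta',c')>0$ and some $v^\ast\in\mathcal V$ lies strictly between $c$ and $c'$, say $c'<v^\ast<c$ (the other ordering is symmetric). Assume complete learning holds. Then in state $v^\ast$ the observer $(\theta,c)$ must take the safe action. In a second state $v^+>c$, which exists by Assumption \ref{ass:nontrivial}, the observer must take the risky action.

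The aim is to exhibit a signal profile with positive probability under both states that leads to the same action in each. Candidates are signal profiles consisting only of observations of type-$(\theta',c')$ friends. These friends take the risky action and succeed in both $v^\ast$ and $v^+$. If all $d_{\theta,c}$ friends are of that type, which has positive probability (or if the other friends' signals also coincide across the two states), the observer's action cannot differ across states. This contradicts complete learning, unless some other friend type separates the states with probability one. I would handle that residual case by noting that each friend is drawn independently, so the all-$(\theta',c')$ profile has probability $h_{\theta,c}(\theta',c')^{d_{\theta,c}}>0$ in both states and yields identical signals. This shows complete learning fails, and hence it is not the unique stable steady state.
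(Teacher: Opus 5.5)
Your steady-state check and your necessity argument are correct and match the paper's. The paper likewise uses a profile made only of $(\theta',c')$-observations. That profile has positive probability in two states requiring different actions, and the second state comes from Assumption~\ref{ass:nontrivial}.

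The genuine gap is uniqueness in the sufficiency direction. You leave it open. Your fallback, restricting to steady states near complete learning, would prove only local uniqueness, which is strictly weaker than the proposition. The missing observation is that the bound you use only locally actually holds globally. For $c<v$, perfect cost homophily means every observable friend has $c'<v$. A success by such a friend reveals $v>c$ regardless of what the observer believes about equilibrium action rates. Hence, for \emph{every} $\alpha_t$,
\[
\alpha_{t+1}(\theta,c,v)\;\geq\;F_{\theta,c}(\alpha_t):=1-\bigl(1-\rho_{\theta,c,t}(v)\bigr)^{d_{\theta,c}},
\qquad
\rho_{\theta,c,t}(v)=\sum_{(\theta',c')}h_{\theta,c}(\theta',c')\,\alpha_t(\theta',c',v).
\]
Here $F$ is monotone, and because $d_{\theta,c}\geq 2$ it satisfies $1-(1-y)^{d_{\theta,c}}>y$ for $y\in(0,1)$. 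This is the bound the paper invokes to rule out stable steady states with $\alpha^*(\theta,c,v)<1$ for some $c<v$. One way to make it precise is as follows.
\begin{itemize}
\item The types with cost below $v$ form a block closed under observation; let $H$ be the row-stochastic matrix of the $h_{\theta,c}$ on that block.
\item A steady state satisfies $\alpha^*\geq F(\alpha^*)\geq H\alpha^*$ on this block.
\item At the minimizing coordinates all these inequalities bind. Hence the minimum is $0$ or $1$, and the minimizers form a closed set.
\item So either the whole block equals $1$, or some closed set of types never experiments at $v$.
\item The latter is unstable. Since $\alpha_t\geq F^t(\alpha_0)$, $F$ pushes any $\varepsilon$-perturbation of that set toward $1$.
\end{itemize}
This is the $\Gamma'(0)>1$ mechanism you gestured at. It has to be run through this global lower bound, not through local error contraction.

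Your treatment of $c>v$ also does not extend beyond a neighborhood of complete learning. Your likelihood-ratio estimate of order $e_t$ presupposes being near complete learning. As written, it also refers to friends with $c'<v$, but under perfect cost homophily an observer with $c>v$ sees only friends with $c'>v$. The paper's route avoids this problem. Once every stable steady state has $\alpha^*(\theta',c',v')=1$ whenever $c'<v'$, for every $v'$, an observer with $c>v$ sees only failures and safe actions. A failure reveals $v<c$. A safe action by a friend of cost $c'$ has probability zero in every state $v'>c$, because there $v'>c'$ and that friend type experiments with probability one. So a safe action also puts all posterior mass on values below $c$. Thus $\alpha^*(\theta,c,v)=0$ exactly, at any stable steady state. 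Without these two steps, your proposal shows only that complete learning is a locally stable steady state, not that it is the unique stable one.
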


		\noindent\textbf{Proof of Proposition \ref{prop:cl-iff-pch}.}
		
		\underline{\textit{If.}} Suppose the network exhibits perfect cost homophily. We first show that complete learning is a steady state. Fix a type \((\theta,c)\) and a value \(v\). If \(c<v\), then every possible friend type \((\theta',c')\) must also satisfy \(c'<v\). Otherwise \(v\) would lie strictly between \(c\) and \(c'\), contradicting perfect cost homophily. Thus, under complete learning, every friend experiments and succeeds. Since no value in \(\mathcal V\) lies between \(c\) and any \(c'\), such a success implies \(v>c\), so type \((\theta,c)\) also takes the risky action.
		
		If \(c>v\), then every friend type \((\theta',c')\) must also satisfy \(c'>v\). Under complete learning, every possible friend takes the safe action. Since no value in \(\mathcal V\) lies between \(c\) and any  \(c'\), these safe choices imply \(v<c\), so type \((\theta,c)\) takes the safe action. Hence we conclude that complete learning reproduces itself and is a steady state.
		
		We next show stability and uniqueness. Let \(\alpha_t(\theta,c,v)\) denote the fraction of type \((\theta,c)\) agents taking the risky action at state $v$. Consider first a type \((\theta,c)\) with \(c<v\). Under perfect cost homophily, any observable friend type has cost on the same side of \(v\) as \(c\). Thus any observed successful experiment by an observable friend type reveals \(v>c\). Defining
		\[
		\rho_{\theta,c,t}(v)
		=
		\sum_{(\theta',c'):\,h_{\theta,c}(\theta',c')>0}
		h_{\theta,c}(\theta',c')\,\alpha_t(\theta',c',v).
		\]
		we can state a lower bound to the fraction of type \((\theta,c)\) agents taking the risky action: $$\alpha_{t+1}(\theta,c,v)
		\geq
		1-\left(1-\rho_{\theta,c,t}(v)\right)^{d_{\theta,c}}.
		$$
		
		In an \(\varepsilon\)-neighborhood of complete learning, \(\alpha_t(\theta',c',v)\geq 1-\varepsilon\) for every observable friend type with \(c'<v\). Hence \(\rho_{\theta,c,t}(v)\geq 1-\varepsilon\), and $\alpha_{t+1}(\theta,c,v)
		\geq
		1-\varepsilon^{d_{\theta,c}}
		>
		1-\varepsilon$ for \(d_{\theta,c}>1\) and \(\varepsilon\) small. Thus the complete-learning value \(\alpha^*(\theta,c,v)=1\) is locally stable.
		
		The same bound $\alpha_{t+1}(\theta,c,v)
		\geq
		1-\varepsilon^{d_{\theta,c}}$ rules out any stable steady state with \(\alpha^*(\theta,c,v)<1\) when \(c<v\). If experimentation by the relevant observable friend types is perturbed upward by a small amount, then the probability of observing at least one successful experiment is amplified when \(d_{\theta,c}>1\). Thus a steady state with \(\alpha^*(\theta,c,v)<1\) cannot be stable. Therefore every stable steady state has \(\alpha^*(\theta,c,v)=1\) whenever \(c<v\).
		
		Now consider a type \((\theta,c)\) with \(c>v\). From the previous paragraph, every stable steady state has \(\alpha^*(\theta',c',v)=1\) for all observable friend types with \(c'<v\). Under perfect cost homophily, however, a type with \(c>v\) cannot observe any type with cost below \(v\). Hence observable friend types have costs above \(v\). If such a friend experiments, the negative payoff reveals that \(v<c\). If such friends take the safe action, then, in a stable steady state, the absence of experimentation by observable friend types whose costs are on the same side of \(v\) also supports the inference that \(v<c\). Thus type \((\theta,c)\) takes the safe action in every stable steady state. Hence \(\alpha^*(\theta,c,v)=0\) whenever \(c>v\).
		
		Combining the cases \(c<v\) and \(c>v\), complete learning is the unique stable steady state.
		
		\medskip
		\underline{\textit{Only if.}} We prove the converse by contrapositive. Suppose the network does not exhibit perfect cost homophily. Then there exist types \((\theta,c)\) and \((\theta',c')\) such that \(h_{\theta,c}(\theta',c')>0\) and some \(v\in\mathcal V\) lies strictly between \(c\) and \(c'\).
		
		First suppose \(c>v>c'\). Under complete learning at state \(v\), type \((\theta',c')\) agents take the risky action and succeed. There is a positive measure of type \((\theta,c)\) agents who observe only such signals. If this signal profile induces the risky action for type \((\theta,c)\), then complete learning fails at state \(v\), since \(c>v\). If it does not induce the risky action, then consider any \(v'\in\mathcal V\) with \(v'>c\), which exists by Assumption \ref{ass:nontrivial}. At state \(v'\), the same signal profile occurs with positive probability and gives the same posterior, so type \((\theta,c)\) still takes the safe action. This contradicts complete learning at state \(v'\), where \(v'>c\).
		
		Now suppose \(c'<v<c\). Under complete learning at state \(v\), type \((\theta',c')\) agents take the risky action and succeed, while type \((\theta,c)\) agents should take the safe action. The same argument as above gives a contradiction.
		
		Finally suppose \(c'>v>c\). Under complete learning at state \(v\), type \((\theta',c')\) agents take the safe action. There is a positive measure of type \((\theta,c)\) agents who observe only such safe choices. If this signal profile induces the safe action for type \((\theta,c)\), then complete learning fails at state \(v\), since \(c<v\). If it induces the risky action, then consider any \(v''\in\mathcal V\) with \(v''<c\), which exists by Assumption \ref{ass:nontrivial}. At state \(v''\), the same signal profile occurs with positive probability and gives the same posterior, so type \((\theta,c)\) still takes the risky action. This contradicts complete learning at state \(v''\), where \(v''<c\). Therefore complete learning cannot be a steady state unless the network exhibits perfect cost homophily. \eproof

		\subsection{Incidental homophily from cost assortativity}\label{app:incidental}
		
		Next, we explore other implications of cost homophily. We show that homophily in one dimension (costs) can lead to incidental homophily on another dimension (blue/green), even when there is no homophily on that second dimension.
		
		For the remainder, we assume blues and greens have the same cost support and Assumption \ref{ass:nontrivial} holds.
		
		We say there is \textit{color-blind perfect cost homophily} if:
		\begin{enumerate}[(i)]
			\item \textbf{Perfect cost homophily.} All observed friends of type $(\theta,c)$ have cost $c$; and
			\item \textbf{Color-blind mixing within cost}. Conditional on cost $c$, friend color is drawn in
			proportion to the population composition at that cost:
			\[
			h_{\theta,c}(\theta',c)
			=\frac{\lambda_{\theta'}\,\Pr_{\theta'}(c)}{\lambda_g\Pr_g(c)+\lambda_b\Pr_b(c)}
			\qquad\text{for all }\theta,\theta',c.
			\]
		\end{enumerate}
		As we show next, this not only implies average homophily in blue/green space, but also implies heterogeneity in that homophily, and that own-type links are assortative in blue/green homophily.

		We say that $\operatorname{Pr}_g(\cdot)$ likelihood ratio dominates $\operatorname{Pr}_b(\cdot)$ if $\frac{\operatorname{Pr}_g(c)}{\operatorname{Pr}_b(c)}$ is increasing in $c$ on the support.

		\begin{proposition} \label{prop:incidental}
			Assume $\Pr_g(\cdot)\neq \Pr_b(\cdot)$ and color-blind perfect cost homophily holds.
			
			\begin{enumerate}[(i)]
				\item \textbf{Average incidental homophily.} For each $\theta\in\{b,g\}$,
				\[
				\sum_{c\in\mathcal C}\Pr_\theta(c)\,h_{\theta,c}(\theta,c) \;>\; \lambda_\theta.
				\]
				
				\item \textbf{Monotonicity of color homophily in cost.}  If
				$\Pr_g(\cdot)$ likelihood-ratio dominates $\Pr_b(\cdot)$, then
				$h_{g,c}(g,c)$ is increasing in $c$ and $h_{b,c}(b,c)$ is decreasing in $c$. Moreover there exists a threshold cost $\bar{c}:=\min \{c \in \mathcal{C}: r(c) \geq 1\}$ that determines homophily/heterophily in opposite ways across groups: $h_{g, c}(g, c)>\lambda_g$ and $h_{b, c}(b, c)<\lambda_b$ whenever $c\geq\bar{c}$, and the inequalities are reversed if $c<\bar{c}$.
			\end{enumerate}
		\end{proposition}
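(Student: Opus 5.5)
The proof is a direct computation from the color-blind mixing formula. Write $\lambda_\theta$ for the population share of group $\theta$ (so $\lambda_g+\lambda_b=1$). Let $m(c):=\lambda_g\Pr_g(c)+\lambda_b\Pr_b(c)$ be the mass of agents with cost $c$. Define the likelihood ratio $r(c):=\Pr_g(c)/\Pr_b(c)$, which is the object that appears in the threshold $\bar c$.

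By condition (ii), $h_{\theta,c}(\theta,c)=\lambda_\theta\Pr_\theta(c)/m(c)$. This does not depend on the observer's own color, and for greens it can be rewritten as
\[
h_{g,c}(g,c)=\frac{\lambda_g r(c)}{\lambda_g r(c)+\lambda_b},
\]
with $h_{b,c}(b,c)=1-h_{g,c}(g,c)$. Costs where $\Pr_b(c)=0$ give $h_{g,c}(g,c)=1$ and are handled by the convention $r=\infty$.

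For part (i), substitute the formula to get
\[
\sum_c \Pr_\theta(c)\,h_{\theta,c}(\theta,c)=\lambda_\theta\sum_c \frac{\Pr_\theta(c)^2}{m(c)}.
\]
So it suffices to show $\sum_c \Pr_\theta(c)^2/m(c)>1$.

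By Cauchy–Schwarz (equivalently, Jensen or Sedrakyan's inequality),
\[
\sum_c \frac{\Pr_\theta(c)^2}{m(c)}\ \ge\ \frac{\bigl(\sum_c \Pr_\theta(c)\bigr)^2}{\sum_c m(c)}=1.
\]
Equality holds if and only if $\Pr_\theta(c)/m(c)$ is constant in $c$ on the support of $m$. A constant ratio means $\Pr_\theta\propto m$, hence $\Pr_\theta=m$ after normalization, and then $\Pr_g=\Pr_b$. This is excluded by assumption, which assumes $\lambda_\theta\in(0,1)$. The one subtlety is costs where $\Pr_\theta(c)=0<m(c)$; these are fine because they only make the Cauchy–Schwarz slack strict.

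For part (ii), $x\mapsto \lambda_g x/(\lambda_g x+\lambda_b)$ is strictly increasing. Therefore likelihood-ratio dominance ($r$ increasing) makes $h_{g,c}(g,c)$ increasing in $c$, and $h_{b,c}(b,c)=1-h_{g,c}(g,c)$ decreasing in $c$. For the threshold, note that $h_{g,c}(g,c)>\lambda_g$ is equivalent to $\lambda_g r/(\lambda_g r+\lambda_b)>\lambda_g$, i.e. $r>\lambda_g r+\lambda_b$, i.e. $\lambda_b(r-1)>0$, i.e. $r(c)>1$. The same algebra gives $h_{b,c}(b,c)<\lambda_b$ if and only if $r(c)>1$.

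Since $r$ is increasing, $\{c:r(c)\ge 1\}$ is an upper set, so it equals $\{c\ge \bar c\}$. This set is nonempty: both distributions sum to one and differ, so $r(c)>1$ somewhere and $r(c)<1$ somewhere.

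The main obstacle is the boundary at $\bar c$, where $r(\bar c)=1$ is possible and yields equality $h=\lambda$ rather than strict inequality. I would handle it by showing the strict claims for $r(c)>1$ and $r(c)<1$, and noting that the weak form holds at $r=1$. If strictness is needed, I would redefine $\bar c$ with a strict inequality, or treat the tie case as generic-failure. Everything else is routine algebra.
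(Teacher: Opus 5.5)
Your proposal is correct and follows the paper's proof essentially step for step. For (i), both apply Cauchy--Schwarz to $\sum_c \Pr_\theta(c)^2/m(c)$, and equality (collinearity) would force $\Pr_g=\Pr_b$; for (ii), both write $h_{g,c}(g,c)$ and $h_{b,c}(b,c)$ as monotone functions of $r(c)$ and read off the threshold. Your caveat that strictness fails at ties $r(c)=1$ matches the paper's own qualification that the inequalities are strict only where $\Pr_g(c)\neq\Pr_b(c)$. Note, though, that redefining $\bar c$ with a strict inequality would only shift the tie to the side $c<\bar c$, so the weak-at-ties reading is the right fix.
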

		
		Figure \ref{fig:hom_types} illustrates Proposition \ref{prop:incidental}.
		
		\begin{figure}[h!!]
			\centering
			\includegraphics[width=0.6\textwidth]{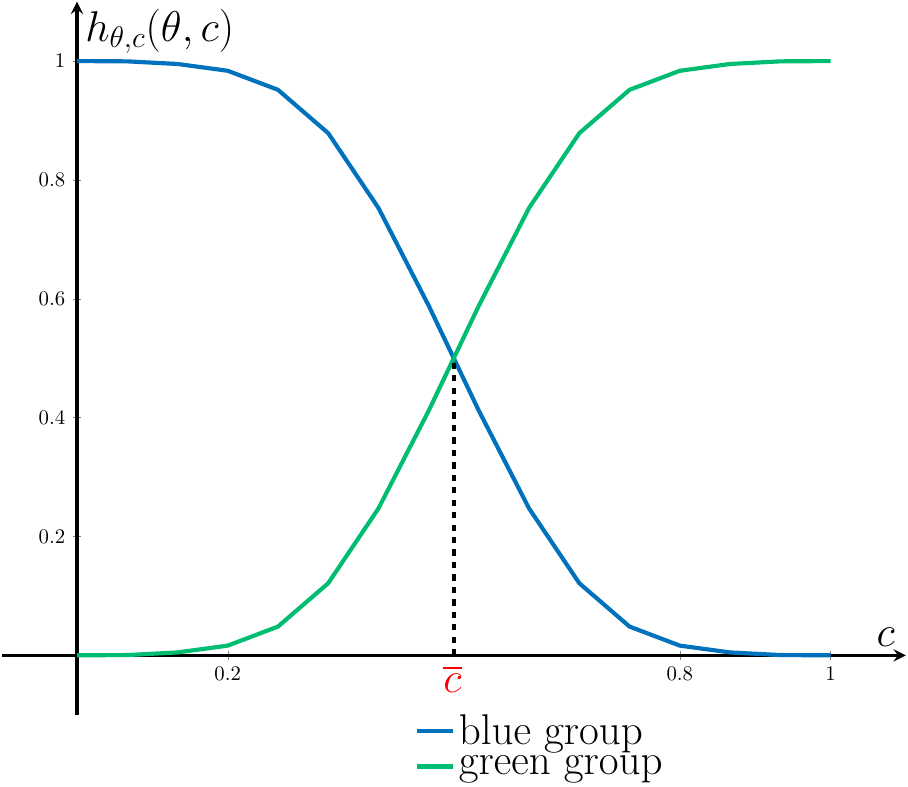}
			\caption{Blue/green homophily as a function of cost when there is perfect cost homophily with no attention to green/blue and $\Pr_g$ likelihood ratio dominates $\Pr_b.$ The figure is for equal-sized groups.}
			\label{fig:hom_types}
		\end{figure}
		
		Proposition \ref{prop:incidental} shows that perfect homophily in costs with no attention to green/blue leads to average homophily in greens and blues. That is, although the network is determined by costs in a color-blind way, in the resulting network the greens are relatively more likely to be linked to greens and blues to blues. This only requires that the cost distribution differ between the two groups.
		
		Moreover, if the groups are ordered in terms of their costs, then homophily is assortative. Higher cost agents from a higher cost group are more homophilous (on blue/green) while lower cost agents from the same group are less homophilous. Finally, green and blue agents have opposite homophily/heterophily above and below a threshold cost $\bar{c}$. This means that cross group links (since they are based on cost) end up being negatively assortative: homophilous greens connect with heterophilous blues and vice versa.
		
		\medskip
		\noindent\textbf{Proof of Proposition \ref{prop:incidental}.}
		Under color-blind perfect cost homophily,
		\[
		h_{\theta,c}(\theta,c)=\frac{\lambda_\theta\,\Pr_\theta(c)}{\lambda_\theta\Pr_\theta(c)
			+(1-\lambda_\theta)\Pr_{\theta'}(c)},
		\qquad \theta'\neq \theta.
		\]
		
		\emph{(i) Average incidental homophily.}
		Compute
		\[
		\sum_{c}\Pr_\theta(c)\,h_{\theta,c}(\theta,c)
		= \lambda_\theta\sum_c \frac{\Pr_\theta(c)^2}{\lambda_\theta\Pr_\theta(c)+(1-\lambda_\theta)\Pr_{\theta'}(c)}.
		\]
		Apply Cauchy--Schwarz to the vectors
		$x_c=\Pr_\theta(c)/\sqrt{\lambda_\theta\Pr_\theta(c)+(1-\lambda_\theta)\Pr_{\theta'}(c)}$
		and
		$y_c=\sqrt{\lambda_\theta\Pr_\theta(c)+(1-\lambda_\theta)\Pr_{\theta'}(c)}$:
		\[
		\Big(\sum_c x_c^2\Big)\Big(\sum_c y_c^2\Big)\;\ge\; \Big(\sum_c x_cy_c\Big)^2
		\;\;\Longrightarrow\;\;
		\sum_c \frac{\Pr_\theta(c)^2}{\lambda_\theta\Pr_\theta(c)+(1-\lambda_\theta)\Pr_{\theta'}(c)}
		\;\ge\; \Big(\sum_c \Pr_\theta(c)\Big)^2 = 1.
		\]
		Since $\sum_c y_c^2=\sum_c(\lambda_\theta\Pr_\theta(c)+(1-\lambda_\theta)\Pr_{\theta'}(c))=1$.
		Equality holds iff $x$ and $y$ are collinear, which here requires
		$\Pr_\theta(\cdot)=\Pr_{\theta'}(\cdot)$. Thus, when $\Pr_\theta(\cdot)\neq \Pr_{\theta'}(\cdot)$,
		the inequality is strict and the displayed average exceeds $\lambda_\theta$.
		
		\emph{(ii) Monotone heterogeneity.}
		Write the likelihood ratio $r(c):=\Pr_g(c)/\Pr_b(c)$.
		Then
		\[
		h_{g,c}(g,c)
		=\frac{\lambda_g r(c)}{\lambda_g r(c)+\lambda_b},
		\qquad
		h_{b,c}(b,c)
		=\frac{\lambda_b}{\lambda_b+\lambda_g r(c)}.
		\]
		Both expressions are monotone in $r(c)$, with $h_{g,c}(g,c)$ increasing and
		$h_{b,c}(b,c)$ decreasing. Under likelihood-ratio dominance, $r(c)$ is increasing in $c$,
		and the monotonicity claims follow.
		
		Moreover,
		$h_{g,c}(g,c)\ge \lambda_g$ iff $r(c)\ge 1$ (equivalently $\Pr_g(c)\ge \Pr_b(c)$), and
		$h_{b,c}(b,c)\le \lambda_b$ iff $r(c)\ge 1$. With $\bar c$ defined as above and $r(\cdot)$
		increasing, these inequalities hold for all $c\ge \bar c$ and reverse for $c<\bar c$,
		with strictness whenever $\Pr_g(c)\neq \Pr_b(c)$.\hfill\eproof
		
		\section{Detailed Dynamics}\label{app:detail}
		
		In this section, we describe the dynamics underlying partial homophily. Recall that we focus on the case with $1>c_g > p \geq c_b > 0.$ Under these conditions, agents with cost \(c_g\)  use the safe action as their default, switching to the risky action only when they receive information that updates their posterior belief to a sufficiently high probability that \(v=1\). Conversely, agents with cost \(c_b\)  have the risky action as their default.
		
		\subsubsection*{Signal Structure and Informative Actions}
		
		Agents with positive cost (i.e., with cost \(c_\theta > 0\)) send perfectly informative  payoff signs about the state \(v\). Specifically, a positive cost agent who receives a \emph{\(+\)} signal from any nonzero cost agent will choose the risky action. Similarly, a \emph{\(-\)} signal leads the agent to choose the safe action.

		If a signal profile lacks any \emph{\(+\)} or \emph{\(-\)} signals, it comprises uninformative \emph{\(+\)} signals from zero cost agents and \(\emptyset\) signals from agents who chose the safe action. Although these \(\emptyset\) signals do not lead to a direct inference about $v$, they still convey information about \(v\) through the observed behavior of nonzero cost agents.
		
		Because of this richer signal structure, the dynamics are more involved than in the one-dimensional case. For example, an initially optimistic blue agent might observe many green (and even some blue) agents taking the safe action. Such an observation can persuade the blue agent that the state is likely \(v=0\), causing them to abandon the default risky action.
		
		\subsubsection*{Signal Profiles and Posterior Beliefs}
		
		More precisely, consider an agent of group \(\theta\) in period \(t\) who observes a signal profile
		\[
		\mathbf{s}_\theta(n_b,n_g),
		\]
		where \(n_g\) cost \(c_g\) agents take the safe action, \(n_b\) cost \(c_b\) agents take the safe action, and the remaining \(d_\theta - n_b - n_g\) agents are of cost \(0\).

		Given the state \(v\), the probability that an agent of group \(\theta\) observes this signal profile in period \(t+1\) is determined by a multinomial formula. For example, the probability for a green agent is
		
		{\scriptsize
			\[
			\phi_g\bigl(\mathbf{s}_g(n_b,n_g) \mid g_t(v), b_t(v)\bigr) = \frac{d_g!}{n_b!\,n_g!\,(d_g - n_b - n_g)!} \Bigl[h_g\,\pi_g\,(1 - g_t(v))\Bigr]^{n_g} \Bigl[(1 - h_g)\,\pi_b\,(1 - b_t(v))\Bigr]^{n_b} \Bigl[h_g\,(1-\pi_g) + (1-h_g)\,(1-\pi_b)\Bigr]^{d_g - n_b - n_g}.
			\]
		}
		An analogous expression holds for blue agents.

		The posterior belief that \(v=1\) for an agent in group \(\theta\) upon observing the signal profile \(\mathbf{s}_\theta(n_b,n_g)\) is given by Bayes' rule:
		
		{\footnotesize
			\begin{equation} \label{posterior-expectation}
				\beta_\theta(n_b,n_g \mid g_t(\cdot), b_t(\cdot)) = \frac{p\, \phi_\theta\bigl(\mathbf{s}_\theta(n_b,n_g) \mid g_t(1), b_t(1)\bigr)}{p\, \phi_\theta\bigl(\mathbf{s}_\theta(n_b,n_g) \mid g_t(1), b_t(1)\bigr) + (1-p)\, \phi_\theta\bigl(\mathbf{s}_\theta(n_b,n_g) \mid g_t(0), b_t(0)\bigr)}.
			\end{equation}
		}
		
		When a positive cost agent takes the risky action, the resulting signal reveals \(v\) perfectly. Therefore, if the receiving agent observes at least one positive (or negative) signal, they immediately choose the risky (or safe) action. Absent such signals, an agent opts for the risky action if and only if the posterior belief in \eqref{posterior-expectation} exceeds the corresponding cost threshold.
		
		\subsection*{Simplification via Monotonicity}
		
		Recall that Lemma \ref{monotone} established that
		\(
		g_{t+1}(0) \leq g_{t+1}(1) \quad \text{and} \quad b_{t+1}(0) \leq b_{t+1}(1). \)
		Consequently, we showed that if an agent observes a safe action from a group \(\theta\) agent, the posterior belief shifts in favor of \(v=0\). That is, for any \(n_b\) and \(n_g\),
		\[
		\beta_\theta(n_b,n_g \mid g_t(\cdot), b_t(\cdot)) \leq \beta_\theta(0,0 \mid g_t(\cdot), b_t(\cdot)) = p.
		\]
		Because \(c_g > p\), the condition
		\( \beta_g(n_b,n_g \mid g_t(\cdot), b_t(\cdot)) > c_g \)
		is never met. Hence, \(\emptyset\) signals do not prompt green agents to deviate from their default safe action. The resulting dynamics are described by the following system:
		{\footnotesize
			\begin{align}
				g_{t+1}(0) &= 0,   \notag \\
				g_{t+1}(1) &= 1 - \Bigl[1 - \bigl(h_g\,\pi_g\,g_t(1) + (1-h_g)\,\pi_b\,b_t(1)\bigr)\Bigr]^{d_g},  \notag \\
				b_{t+1}(0) &= \sum_{\substack{n_b,n_g \geq 0 \\ n_b+n_g \leq d_b}}
				\phi_b\bigl(\mathbf{s}_b(n_b,n_g) \mid g_t(0), b_t(0)\bigr)\, \mathbbm{1}\Bigl\{\beta_b(n_b,n_g \mid g_t(\cdot), b_t(\cdot)) > c_b\Bigr\}, \notag\\
				b_{t+1}(1) &= 1 - \sum_{\substack{n_b,n_g \geq 0 \\ n_b+n_g \leq d_b}}
				\phi_b\bigl(\mathbf{s}_b(n_b,n_g) \mid g_t(1), b_t(1)\bigr)\, \mathbbm{1}\Bigl\{\beta_b(n_b,n_g \mid g_t(\cdot), b_t(\cdot)) < c_b\Bigr\}. \notag
			\end{align}
		}

		Note that,\(b_t(0), g_t(0)\) and  \(b_t(1), g_t(1)\) are interrelated through the indicator functions \(\mathbbm{1}\{\beta_\theta(n_b,n_g \mid g_t(\cdot), b_t(\cdot)) < c_\theta\}\). When the dynamics are  continuous (or differentiable), these indicators are locally constant. Thereby, the dynamics are locally decoupled whenever they are continuous (or differentiable).
		
		\subsection*{Comparative Statics}
		
		To analyze the comparative statics, we compute the partial derivatives around a neighborhood of \(\alpha\) (whenever \(g_{t+1}\) and \(b_{t+1}\) are differentiable with respect to \(g_t\) and \(b_t\)). For green agents, we have
		\begin{equation}
			\frac{\partial g_{t+1}(1)}{\partial g_t(1)} = d_g\,(1-A)^{d_g-1}\,h_g\,\pi_g > 0 \quad \text{and} \quad \frac{\partial g_{t+1}(1)}{\partial b_t(1)} = d_g\,(1-A)^{d_g-1}\,(1-h_g)\,\pi_b > 0, \label{green-partials}
		\end{equation}
		where
		\[
		A = h_g\,\pi_g\,g_t(1) + (1-h_g)\,\pi_b\,b_t(1) < 1.
		\]
		Similarly, for blue agents (when differentiable) we obtain
		
		\begin{align}
			\frac{\partial b_{t+1}(1)}{\partial g_t(1)} &=
			- \sum_{\substack{n_b,n_g \geq 0 \\ n_b+n_g \leq d_b}}
			\mathbbm{1}\Bigl\{\beta_b(n_b,n_g \mid g_t(\cdot), b_t(\cdot)) < c_b\Bigr\}
			\frac{\partial \phi_b\bigl(\mathbf{s}_b(n_b,n_g)\mid g_t(1),b_t(1)\bigr)}{\partial g_t(1)}
			\geq 0, \label{blue-partial-1}\\
			\frac{\partial b_{t+1}(1)}{\partial b_t(1)}
			&=
			- \sum_{\substack{n_b,n_g \geq 0 \\ n_b+n_g \leq d_b}}
			\mathbbm{1}\Bigl\{\beta_b(n_b,n_g \mid g_t(\cdot), b_t(\cdot)) < c_b\Bigr\}
			\frac{\partial \phi_b\bigl(\mathbf{s}_b(n_b,n_g)\mid g_t(1),b_t(1)\bigr)}{\partial b_t(1)}
			\geq 0.  \label{blue-partial-2}
		\end{align}
		The weak inequalities are the relevant conclusion; they are strict only when some locally safe-inducing no-direct profile contains the corresponding friend type with positive probability. The signs follow from the derivatives of the profile probabilities:
		\begin{align*}
			\frac{\partial \phi_b}{\partial g_t(1)} &= -\,n_g\,(1-h_b)\,\pi_g\,C_b\,A_b^{n_b}\,B_b^{n_g-1}\,R_b^{\,d_b-n_b-n_g} \leq 0 \quad \text{(negative if \(n_g>0\))},  \\
			\frac{\partial \phi_b}{\partial b_t(1)} &= -\,n_b\,h_b\,\pi_b\,C_b\,A_b^{n_b-1}\,B_b^{n_g}\,R_b^{\,d_b-n_b-n_g} \leq 0 \quad \text{(negative if \(n_b>0\))}.
		\end{align*} 
		where 
		$$C_b=\frac{d_b!}{n_b!n_g!(d_b-n_g-n_b)!}, \qquad A_b=h_b \pi_b(1-b_t(1)),$$ and $$B_b=(1-h_b) \pi_g\left(1-g_t(1)\right), \qquad R_b=h_b(1-\pi_b)+(1-h_b)(1-\pi_g).$$
		Finally, note that:
		\begin{align}
			\frac{\partial g_{t+1}(1)}{\partial h_g} &= d_g\,(1-A_g)^{d_g-1}\Bigl(\pi_g\,g_t(1)-\pi_b\,b_t(1)\Bigr), &  \frac{\partial b_{t+1}(1)}{\partial h_g} = 0 \label{partial-hg}
		\end{align} 
		where again  $A_g = h_g\,\pi_g\,g_t(1) + (1-h_g)\,\pi_b\,b_t(1)<1$.

		\subsection*{Existence of the Fixed Point}

		We are interested in the dynamic system described with the equations above, and denote the steady state of this system by $\alpha^*(\cdot)=(g^*(0),g^*(1),b^*(0),b^*(1))$. 
		
		Consider the decision of an agent receiving signal $\mathbf{s}_\theta$ given a previous-period state $\alpha\in[0,1]^4$,  and allow for mixing between taking the risky and safe actions. The expected-payoff objective is continuous in $\alpha$ for each signal profile.  Berge's Maximum Theorem therefore implies that the individual argmax correspondence is non-empty, compact-valued, and upper hemi-continuous. Convexity of the individual best-response set comes from allowing mixed actions, so the best-response set is an interval of mixing probabilities whenever the agent is indifferent and a singleton otherwise.
		
		The aggregate best response for agents with type $(\theta,c)$  is obtained by taking expectations over signal realizations. It inherits upper hemi-continuity and non-emptiness, and it is convex valued because it aggregates interval-valued mixed best responses over a continuum of agents. Kakutani's fixed-point theorem therefore gives a fixed point $\alpha^*\in\Gamma(\alpha^*)$, which is precisely a steady state of the dynamic system.

		Finally, any aggregate mixed equilibrium can be implemented in pure strategies at the population level: within $(\theta,c)$ types, all agents face the same signal distribution, and a mixing probability $q$ is implemented by having a fraction $q$ of that continuum group choose the risky action. The steady-state coordinates $g^*(v)$ and $b^*(v)$ are these aggregate group fractions.
		
		\section{Same-Default Cases under Partial Homophily}
		
		\label{app:same-default}
		
		In this appendix we provide the transition equations and steady-state conditions for the two cases in which both groups have the same default action. These cases are close to the
		full-homophily benchmark because both groups initially move in the same
		direction. Either both groups experiment under the prior, or both groups avoid experimentation under the prior. Cross-group links then change the probability of observing informative experimenters, but they do not create the asymmetric default behavior studied in the main text.
		
		Throughout this appendix, the second same-default case is stated for
		\(p<c_b,c_g<1\). This restriction ensures that if a positive cost agent takes
		the risky action, then success reveals \(v=1\) and failure reveals \(v=0\).
		
		For a green observer, let \(\mathbf{s}_g(n_b,n_g)\) denote a no-direct-signal
		profile in which \(n_b\) positive cost blue friends take the safe action,
		\(n_g\) positive cost green friends take the safe action, and all remaining
		observed friends have cost zero. Conditional on state \(v\), the probability of
		this profile is
		{\footnotesize
			\begin{align*}
				\phi_g\bigl(\mathbf{s}_g(n_b,n_g)\mid g_t(v),b_t(v)\bigr)
				&=
				\frac{d_g!}{n_b!\,n_g!\,(d_g-n_b-n_g)!} \\
				&\quad \times
				\Bigl[(1-h_g)\pi_b(1-b_t(v))\Bigr]^{n_b}
				\Bigl[h_g\pi_g(1-g_t(v))\Bigr]^{n_g} \\
				&\quad \times
				\Bigl[(1-h_g)(1-\pi_b)+h_g(1-\pi_g)\Bigr]^{d_g-n_b-n_g}.
			\end{align*}
		}
		For a blue observer, the analogous probability is
		{\footnotesize
			\begin{align*}
				\phi_b\bigl(\mathbf{s}_b(n_b,n_g)\mid g_t(v),b_t(v)\bigr)
				&=
				\frac{d_b!}{n_b!\,n_g!\,(d_b-n_b-n_g)!} \\
				&\quad \times
				\Bigl[h_b\pi_b(1-b_t(v))\Bigr]^{n_b}
				\Bigl[(1-h_b)\pi_g(1-g_t(v))\Bigr]^{n_g} \\
				&\quad \times
				\Bigl[h_b(1-\pi_b)+(1-h_b)(1-\pi_g)\Bigr]^{d_b-n_b-n_g}.
			\end{align*}
		}
		The posterior belief after a no-direct-signal profile is
		{\footnotesize
			\begin{align*}
				\beta_\theta(n_b,n_g\mid g_t(\cdot),b_t(\cdot))
				&=
				\frac{
					p\,\phi_\theta\bigl(\mathbf{s}_\theta(n_b,n_g)\mid g_t(1),b_t(1)\bigr)
				}{
					p\,\phi_\theta\bigl(\mathbf{s}_\theta(n_b,n_g)\mid g_t(1),b_t(1)\bigr)
					+
					(1-p)\,\phi_\theta\bigl(\mathbf{s}_\theta(n_b,n_g)\mid g_t(0),b_t(0)\bigr)
				}.
			\end{align*}
		}
		The value of \(\beta_\theta\) at zero-probability profiles is immaterial.
		
		\subsection*{Both groups have the risky action as their default}
		
		Suppose \(c_b,c_g\leq p\). Then positive cost agents in both groups take the
		risky action if they receive no information beyond the prior. If \(v=1\), a
		positive cost friend who takes the risky action generates a success signal,
		which reveals the good state. If \(v=0\), a positive cost friend who takes the
		risky action generates a failure signal, which reveals the bad state. Hence,
		direct signals are decisive. The only nontrivial part of the dynamics concerns profiles with no direct signal. The dynamics are
		{\footnotesize
			\begin{align*}
				g_{t+1}(1)
				&=
				1-
				\Bigl(
				1-h_g\pi_g g_t(1)-(1-h_g)\pi_b b_t(1)
				\Bigr)^{d_g} \\
				&\quad+
				\sum_{\substack{n_b,n_g\geq 0\\ n_b+n_g\leq d_g}}
				\phi_g\bigl(\mathbf{s}_g(n_b,n_g)\mid g_t(1),b_t(1)\bigr)
				\mathbbm{1}
				\left\{
				\beta_g(n_b,n_g\mid g_t(\cdot),b_t(\cdot))\geq c_g
				\right\}, \\[0.4em]
				g_{t+1}(0)
				&=
				\sum_{\substack{n_b,n_g\geq 0\\ n_b+n_g\leq d_g}}
				\phi_g\bigl(\mathbf{s}_g(n_b,n_g)\mid g_t(0),b_t(0)\bigr)
				\mathbbm{1}
				\left\{
				\beta_g(n_b,n_g\mid g_t(\cdot),b_t(\cdot))\geq c_g
				\right\}, \\[0.4em]
				b_{t+1}(1)
				&=
				1-
				\Bigl(
				1-h_b\pi_b b_t(1)-(1-h_b)\pi_g g_t(1)
				\Bigr)^{d_b} \\
				&\quad+
				\sum_{\substack{n_b,n_g\geq 0\\ n_b+n_g\leq d_b}}
				\phi_b\bigl(\mathbf{s}_b(n_b,n_g)\mid g_t(1),b_t(1)\bigr)
				\mathbbm{1}
				\left\{
				\beta_b(n_b,n_g\mid g_t(\cdot),b_t(\cdot))\geq c_b
				\right\}, \\[0.4em]
				b_{t+1}(0)
				&=
				\sum_{\substack{n_b,n_g\geq 0\\ n_b+n_g\leq d_b}}
				\phi_b\bigl(\mathbf{s}_b(n_b,n_g)\mid g_t(0),b_t(0)\bigr)
				\mathbbm{1}
				\left\{
				\beta_b(n_b,n_g\mid g_t(\cdot),b_t(\cdot))\geq c_b
				\right\}.
			\end{align*}
		}
		
		The corresponding steady state is
		\begin{align*}
			g(1)&=1, &
			b(1)&=1, \\[0.4em]
			g(0)
			&=
			\Bigl[
			h_g(1-\pi_g)+(1-h_g)(1-\pi_b)
			\Bigr]^{d_g}, &
			b(0)
			&=
			\Bigl[
			h_b(1-\pi_b)+(1-h_b)(1-\pi_g)
			\Bigr]^{d_b}.
		\end{align*}

		The logic is as follows. In the good state, all positive cost agents eventually
		experiment: observing any positive cost friend who experimented reveals
		\(v=1\), while observing only zero cost friends leaves the posterior equal to
		the prior \(p\), which is enough to justify the risky action because
		\(c_\theta\leq p\). In the bad state, any positive cost friend is informative. If the friend experimented, the observer sees a failure. If the friend did not experiment, that safe action is impossible in the good state at the steady state just described. Thus a positive cost observer experiments in the bad state only when all observed friends have cost zero. This gives the two probabilities displayed above.
		
		The full-homophily benchmark is the special case in which a green agent only
		observes green friends and a blue agent only observes blue friends. Then the
		expressions reduce to $g(0)=(1-\pi_g)^{d_g}$ and $b(0)=(1-\pi_b)^{d_b},$ with steady state \(g(1)=b(1)=1\).
		
		\subsection*{Both groups have the safe action as their default}
		
		Now suppose \(p<c_b,c_g<1\). Then positive cost agents in both groups take the
		safe action under the prior. By Lemma~\ref{monotone}, no-direct-signal profiles
		weakly lower the posterior relative to the prior, so they cannot induce
		experimentation when \(p<c_\theta\). Thus positive cost agents experiment only
		after observing a success by a positive cost agent. 
		
		In the bad state, no positive cost agent ever receives such a success signal, so $g(0)=0=b(0)$.

		In the good state, the steady-state conditions are
		{\footnotesize
			\begin{align*}
				g(1)
				&=
				1-
				\Bigl(
				1-h_g\pi_g g(1)-(1-h_g)\pi_b b(1)
				\Bigr)^{d_g}, \\[0.4em]
				b(1)
				&=
				1-
				\Bigl(
				1-h_b\pi_b b(1)-(1-h_b)\pi_g g(1)
				\Bigr)^{d_b}.
			\end{align*}
		}
		These equations always have the zero solution $g(1)=0=b(1)$. This is the no-experimentation steady state: if no positive cost agent
		experiments in the good state, then no later positive cost agent observes a
		success, and the safe action remains self-confirming.
		
		For the clean interior case \(\pi_b,\pi_g\in(0,1)\) and
		\(h_b,h_g\in(0,1)\), the existence of a nonzero steady state is governed by
		the linearization at the zero steady state. That linearization is the matrix
		{
			\begin{align*}
				\begin{pmatrix}
					d_g h_g\pi_g & d_g(1-h_g)\pi_b \\
					d_b(1-h_b)\pi_g & d_b h_b\pi_b
				\end{pmatrix}.
			\end{align*}
		}
		If the spectral radius of this matrix is weakly below one, then the zero steady state is the
		unique steady state. If it is strictly above one, then there is a unique
		strictly positive steady state $(g(1),b(1))\in(0,1)^2$ solving
		{
			\begin{align*}
				g(1)
				&=
				1-
				\Bigl(
				1-h_g\pi_g g(1)-(1-h_g)\pi_b b(1)
				\Bigr)^{d_g}, \\[0.4em]
				b(1)
				&=
				1-
				\Bigl(
				1-h_b\pi_b b(1)-(1-h_b)\pi_g g(1)
				\Bigr)^{d_b}.
			\end{align*}
		}
		The zero steady state is then unstable, while the strictly positive steady state is the stable learning steady state. 
		
		In the interior case \(h_b,h_g,\pi_b,\pi_g\in(0,1)\), the interior steady state is the unique stable steady state if and only if either $d_g h_g\pi_g\geq 1,$ or  $d_b h_b\pi_b\geq 1,$ or
		\begin{align*}
			d_gd_b(1-h_g)(1-h_b)\pi_g\pi_b
			>
			\bigl(1-d_g h_g\pi_g\bigr)
			\bigl(1-d_b h_b\pi_b\bigr).
		\end{align*}
		The first two inequalities say that one group can sustain learning through same-group observations alone. The last inequality covers the case in which neither group is self-sustaining in isolation, but the two cross-group learning channels jointly sustain experimentation.
		
		For arbitrary integer degrees,
		there is generally no useful closed-form expression for this positive steady state beyond these two polynomial equations. 
		
		This is the two-dimensional analogue of the full-homophily threshold. Under
		full homophily, the system separates into $g(1)=1-\bigl(1-\pi_g g(1)\bigr)^{d_g},$ and $b(1)=1-\bigl(1-\pi_b b(1)\bigr)^{d_b}.$ So, group \(\theta\) has a positive stable learning steady state exactly when
		\(\pi_\theta d_\theta>1\).
		
	\end{appendices}
\end{document}